\documentclass[journal]{IEEEtran}
\usepackage{amsmath,amssymb,amsthm}
\usepackage{graphicx}
\usepackage{epstopdf}
\usepackage{cite}
\usepackage{algorithm}
\usepackage{algpseudocode}

\usepackage{subfigure}
\usepackage{tikz}
\usepackage{pgfplots}
\usepgfplotslibrary{fillbetween}
\usepgfplotslibrary{groupplots}


\IEEEoverridecommandlockouts
 

\theoremstyle{plain}
\newtheorem{theorem}{Theorem}

\newtheorem{lemma}{Lemma}
\newtheorem{proposition}{Proposition}

\newtheorem{assumption}{Assumption}
\theoremstyle{definition}
\newtheorem{definition}{Definition}

\theoremstyle{remark}
\newtheorem{remark}{Remark}

\DeclareMathOperator{\spn}{span}

\DeclareMathOperator{\tr}{tr}
\DeclareMathOperator{\diag}{diag}

\DeclareMathOperator*{\argmin}{arg\,min}
\DeclareMathOperator*{\argmax}{arg\,max}
\DeclareMathOperator{\E}{\mathbb{E}}
\newcommand{\R}{\mathbb{R}}
\newcommand{\real}{\mathbb{R}}

\newcommand{\calS}{\mathcal S}
\newcommand{\calF}{\mathcal F}
\newcommand{\calA}{\mathcal A}
\newcommand{\calH}{\mathcal H}
\newcommand{\calC}{\mathcal C}
\newcommand{\calP}{\mathcal P}

\newcommand{\calR}{\mathcal R}

\newcommand{\EX}{\mathbb{E}}
\newcommand{\Kappa}{K}

\newcommand{\bD}{\boldsymbol{D}}
\newcommand{\bDi}{\bD_{\textbf{h}}}


%



\begin{document}

\title{
Multi-task Reinforcement Learning in Reproducing Kernel Hilbert Spaces via Cross-learning
}
\author{Juan Cervi\~no, Juan Andr\'es Bazerque, Miguel Calvo-Fullana, and Alejandro Ribeiro.
\thanks{
This work is supported by ARL DCIST CRA W911NF-17-2-0181 and the Intel Science and Technology Center for Wireless Autonomous Systems.
}
\thanks{
J. Cervi\~no, M. Calvo-Fullana, and A. Ribeiro are with the Department of Electrical and Systems Engineering, University of Pennsylvania, Philadelphia, PA 19104, USA (e-mail: \mbox{jcervino}@seas.upenn.edu; \mbox{cfullana}@seas.upenn.edu; \mbox{aribeiro}@seas.upenn.edu).
}
\thanks{
J. A. Bazerque is with the Department of Electrical Engineering, School of Engineering, UdelaR, Uruguay (e-mail: \mbox{jbazerque}@fing.edu.uy)  
}
\thanks{
This work has been presented in part at the 2019 American Control Conference (ACC)\cite{cervino2019meta}.
}
}

\maketitle

\begin{abstract}
Reinforcement learning (RL) is a framework to optimize a control policy using rewards that are revealed by the system as a response to a control action. In its standard form, RL involves a single agent that uses its policy to accomplish a specific task.~These methods require large amounts of reward samples to achieve good performance, and may not generalize well when the task is modified, even if the new task is related.~In this paper we are interested in a collaborative scheme in which multiple agents with different tasks optimize their policies jointly. To this end, we introduce \textit{cross-learning}, in which agents tackling related tasks have their policies constrained to be close to one another.~Two properties make our new approach attractive: (i) it produces a \textit{multi-task} central policy that can be used as a starting point to adapt quickly to one of the tasks trained for, in a situation when the agent does not know which task is currently facing, and (ii) as in \textit{meta-learning}, it adapts to environments related but different to those seen during training. We focus on continuous policies belonging to reproducing kernel Hilbert spaces for which we bound the distance between the task-specific policies and the cross-learned policy.~To solve the resulting optimization problem, we resort to a projected policy gradient algorithm and prove that it converges to a near-optimal solution with high probability.~We evaluate our methodology with a navigation example in which agents can move through environments with obstacles of multiple shapes and avoid obstacles not trained for.
\end{abstract}

\begin{IEEEkeywords}
Reinforcement learning, multi-task learning, meta-learning,  optimization.
\end{IEEEkeywords}

\IEEEpeerreviewmaketitle

\section{Introduction}
\label{sec:Introduction}

Reinforcement Learning (RL) provides a framework for solving a wide variety of problems in which the system model is not available or it may be intractable \cite{sutton2018reinforcement}. The idea at the core of RL is that an agent can learn the optimal action to take from its experience, gained while interacting with the environment. Specifically, the agent is aware of its current state and chooses an action, after which the next state is reached. Subsequently, after each action is taken, the agent receives a reward from the environment, which regulates the value of the actions chosen. The agent's objective is to maximize the cumulative reward observed throughout the trajectory of states and actions. Being able to tackle complex systems, RL has found success in applications as diverse as autonomous driving\cite{michels2005high,sallab2017deep}, robotics \cite{kober2013reinforcement}, and smart grids\cite{peters2013reinforcement}, among many others.

Albeit showing success in tackling complex problems, in order to do so RL  requires a large number of samples collected from actions and state transitions. Besides, a policy that achieves good performance may work poorly at a different task, even if it is related or akin to the one trained for. The attempt to alleviate these issues is the goal of Multi-Task Learning (MTL) \cite{caruana1997multitask}, either seeking to leverage information in related tasks, or learning multiple tasks in a collective manner. In a general sense, these approaches exploit the fact that the data collected when learning one task can help learning other similar tasks. There are numerous ways of posing MTL problems. When studied for supervised learning, one could upper bound the distance between the weight vectors of different linear classifiers \cite{kato2008multi,kato2009conic}. In the same vein, one could  use  convex local proximity constraints \cite{koppel2017proximity,koppel2019parsimonious}. Other approaches attempt to regulate which task should be sampled by maintaining an estimate of each current performance\cite{sharma2017learning}. Recently, a related approach named \textit{meta-learning} \cite{finn2017model}, has gained notorious popularity. The twist that makes it different from MTL is that the objective of meta-learning is to find a policy with good generalization with respect to unforeseen tasks. For the interested reader, a more exhaustive survey on multi-task learning can be found in \cite{zhang2017survey}.

In this work, we introduce a collaborative form of multi-task learning and meta-learning, which we denote as \emph{cross-learning}. We introduce convex constraints in the policy space that bound the distance between task specific policies corresponding to different agents. Compared to previous approaches \cite{kato2008multi,kato2009conic}, our work does not assume the relationships among tasks to be known. Furthermore, different from \cite{koppel2017proximity,koppel2019parsimonious}, we avoid pair-wise constraints via the introduction of a central \emph{cross-learned} policy. By nature of our proposed formulation, the benefits are twofold, (i) it produces a cross-learned policy that performs well for all the trained tasks, (ii) it produces an individual policy for each task that improves on an equivalent individual policy trained without cross-learning.

We focus on policies which lie on a reproducing kernel Hilbert space, allowing for the approximation of a large class of functions \cite{scholkopf2002learning}, and pose cross-learning  as a constrained RL problem. Then we show that the projection onto the resulting set of constraints can be cast as a Quadratically Constrained Quadratic Program (QCQP) \cite{boyd2009convex}, and propose a projected policy gradient solution. 
As an alternative, we offer a simplified formulation in which the coupling is relaxed to be satisfied on average by the policies. The simplified projection admits a closed form solution, both reducing the numerical burden of solving a  QCQP  per iteration, and clarifying the effect of the projection over the policies.     
We show the convergence of the projected gradient ascent method to a neighborhood of the optimal solution with high probability. Further, to handle the memory explosion due to the increasing dimension of the kernel representations, we use \cite{vincent2002kernel} and \cite{koppel2019parsimonious} to project the policies to a lower dimensional space without sacrificing convergence guarantees. Finally we evaluate the proposed cross-learning approach in a navigation problem with multiple obstacles. We show that our cross-learning method produces policies that adapt to different scenarios outperforming their task-specific counterparts, particularly when facing objects not seen during training.

The rest of the paper is structured as follows: Section \ref{sec:ProblemFormulation} introduces  RL  and its multi-task formulation. A treatment on how to solve the RL problem by means of gradient ascent is covered in Section \ref{sec:RL}. The aspects related to the necessary projection step in the cross-learning algorithm are covered in Section \ref{sec:GeoInter}, with Section \ref{subsec:relaxed} dedicated to its relaxed counterpart. The resulting cross-learning algorithm is covered in Section \ref{sec:CLRLA}, with practical issues regarding policy representation being discussed in Section \ref{subsec:sparse_projection}. Following, Section \ref{sec:convergence}  provides a detailed analysis of  convergence,  with Section \ref{subsec:asumptions} covering the assumptions taken to derive the results. The method is evaluated by means of simulations in Section \ref{sec:NumericalResults}, where a navigation problem is considered with single (Section \ref{subsec:NavigationSingle}) and multiple (Section \ref{subsec:NavigationMultiple}) obstacles. Finally, Section \ref{sec:Conclusions} closes the paper with some remarks and conclusions.

\section{Problem Formulation}
\label{sec:ProblemFormulation}

Before introducing our cross-learning formulation, let  us present the Markov decision process that models a single-task scenario. Consider two compact sets $\calS \subset \R^q$ and $\calA \subset \R^p$, representing the possible states of an agent and the actions it can take. The agent takes actions sequentially. At time $t$, the agent is in state $s_t \in \calS$ and it takes an action $a_t \in \calA$, transitioning to a state $s_{t+1} \in \calS$. The aforementioned dynamics are governed by transition probabilities, unknown to the agent, which satisfy the following Markov property, $p(s_{t+1}=s | \{s_l, a_l\}_{l\leq t})=p(s_{t+1}=s | s_t, a_t)$. Furthermore, each time that the agent takes an action and transitions into the next state, it receives a reward defined by a function $r : \calS \times \calA \to \R$. 

In order to decide which actions to take, the agent has a function $h:\calS \to \calA$, called a policy, which in this work we assume to belong to a  reproducing kernel Hilbert space (RKHS) $\calH$. By definition, $\calH$ is a complete, linear function space endowed with a unique reproducing kernel $\bar \kappa:\calS \times \calS \to \R$. We further define $\kappa(s,s')$ as a diagonal matrix-valued function of size $p\times p$ where $\kappa(s,s')_{ii}=\bar \kappa(s,s')$.  Explicitly, every function $h \in \calH$ can be written as a linear combination of $M$ kernels (including $M=\infty$), i.e.,
\begin{align}
\label{eq:kernel_expansion}
h(\cdot)=\sum_{m=1}^{M}  \kappa(s_{m},\cdot) w_{m} 
\end{align}
where $w_{m} \in \calA$ denote the weights and $s_m \in \calS$ are called the kernel centers, or knots. Hence, every function $h$ can be represented by a dictionary $\textbf{D}_h = \{ s_1,\dots,s_M \}$ and a set of weights $\textbf{W}_h=\{w_1,\dots,w_M\}$. 

\begin{remark}
For $p=1$,   $\calH=\overline{\spn}\{\kappa(s,s'):\ s\in \calS\}$ is defined as a space of functions spanned by $\kappa(s,s')$ where $\kappa:\calS\times \calS\to \R $ consists of a symmetric positive definite kernel,  and equipped with the inner product defined by the reproducing property $\langle \kappa(s,\cdot),h(\cdot)\rangle=h(s)$. This inner product induces the norm $\|h\|=(\langle h,h\rangle)^{1/2}$ to be used henceforth. For $p>1$, $\kappa(s,s')\in\R^{p\times p}$ is a diagonal matrix whose entries satisfy the definition for $p=1$, and the inner product between two vector valued functions $h$ and $g$ aggregates the $p$ products of their entries. Adopting vector-valued RKHS will allow to define a controller with multiple  outputs. The nonparametric nature of RKHSs translates into a rich space of policies that operate in continuous spaces, so that we can avoid discretizing  variables, optimizing over the policies directly instead of working with possibly inaccurate parameterizations.
\end{remark}

Based on its current state, the agent uses its policy to decide its next action to take. While deterministic policies i.e., $a_t=h(s_t)$ are possible, in this work, we focus on stochastic policies. Policies with a random component foster exploration while offering several desirable properties \cite{sutton2000policy}. To this end, we introduce zero-mean multivariate Gaussian noise $n_t$, which renders actions to be  $a_t=h(s_t)+n_t$. Equivalently, we can define the conditional probability of the action given the policy and the state  $\pi_h (a | s ) : \calS \times \calA \to [0,1]$ as
\begin{align}
\pi_h(a|s)=\frac{1}{\det ( 2 \pi \Sigma ) } 
e^{-\frac{1}{2}
\bigl(a - h(s) \bigr)^T \Sigma^{-1} \bigl(a - h(s) \bigr)
 }
\end{align}
where $\Sigma$ is the covariance matrix of the Gaussian distribution. At time $t$, this is equivalent to acting by drawing a sample $a_t \sim \mathcal{N}(h(s_t),\Sigma)$. Ultimately, the objective of the agent is to find a policy $h$ that maximizes the expected discounted returns, given by
\begin{align}	
\label{eq:RLutility}
U(h) \triangleq  \E_{s,a}\left[\sum_{t=0}^\infty \gamma^t r(s_t,a_t)\Big| h\right],
\end{align}  
 where $\gamma \in (0,1)$ is the discount rate. Lower discount rates $\gamma$ favor myopic agents only concerned about the near future, while   $\gamma$ closer to $1$ corresponds to farsighted agents that value future rewards more prominently \cite{sutton2018reinforcement}. The expectation in \eqref{eq:RLutility} is taken with respect to the whole trajectory of states and actions. 
 
\subsection{Multi-task Reinforcement Learning}
\label{subsec:multitaskrl}

Now, we consider the case of multiple tasks. Specifically $N$ tasks, each of them with its expected discounted return $U_i$, for $i=1,\ldots,N$. The goal of our multitask RL methodology is to learn individual $h_i$ for each task and a common policy $g$ with acceptable level of performance for the tasks. It is expected that information from different tasks would enable enhanced learning performance if the reward data associated to these tasks is correlated. A simple approach would be to directly obtain the individual policies $h_i$ that maximize the discounted return of each of the $N$ tasks separately. Namely,
\begin{align}
\tag{PI}
 \{ h_i^\star \}= \underset{h_i\in \mathcal H}{\argmax} ~ U_i (h_i).
   \label{eq:DistributedProblemFormulation}
\end{align}
The problem with this formulation is that it is entirely distributed. Hence, this approach is unable to provide any form of generalization across tasks. In other words, and individual policy $h_i^\star$ obtained by computing \eqref{eq:DistributedProblemFormulation} maximizes the expected discounted reward $U_i$ but it is not trained to achieve any level of performance in $U_j$ for $j \neq i$, rendering no generalization performance across tasks. Furthermore, since, each policy $h_i^\star$ is obtained separately by maximizing a particular expected return $U_i$, the data used for policy $h_i^\star$ does not contribute to the learning of policies $h_j^\star$.

The other possible extreme  would be to force a single policy $g$ to maximize the average expected discounted return $U_i$ over the $N$ tasks, therefore learning a policy that perform well on average for all tasks. That is, 
\begin{align}
g^\star= \underset{g\in \mathcal H}{\argmax}  ~ \frac{1}{N} \sum_{i=1}^N U_i (g).
\label{eq:SinglePolicyProblem}
\tag{PII}
\end{align}
The intuition being this formulation is that while the expected discounted returns of each task $U_i$ might be different from each other, there might exist some correlation among them. However, adopting a common policy $g$  reduces the adaptability of the learning algorithm to each specific tasks. Besides, when policy $g^\star$ is analyzed separately, say for a particular task $U_i$, nothing guarantees individual task performance. 

\subsection{Cross-learning}

The drawbacks of the previous approaches are apparent. They either aim to obtain individual policies, as in \eqref{eq:DistributedProblemFormulation}, or a central policy as in \eqref{eq:SinglePolicyProblem}. In order to leverage experience among tasks and achieve a generalization performance we introduce our cross-learning framework. This formulation aims to learn a family of individual task policies $\bar h_i$ together with a cross-learned policy $\bar g$ by solving the following optimization problem,
\begin{subequations}
\begin{align*}
\tag{PCL}
\label{eq:originalProblemCrosslearning}
   \{\bar h_i\},\bar g=\underset{h_i,g\in\calH}{\argmax}  \quad   & \sum_{i=1}^{N}  U_i \left(h_i\right)  \\
   \text{subject to}
		\quad   & \left\| h_i - g \right\| \leq \epsilon, \quad i=1,\ldots,N.
\end{align*}
\end{subequations}
The key guiding principle of our framework is the search for policies that stay close together in the optimization space via the introduction of a measure of centrality $\epsilon$. This novel formulation allows each policy $h_i$ to train in it specific expected discounted return $U_i$ while forcing centrality through the cross learned policy $g$. Note that since the central policy $\bar g$ is not forced to maximize the sum of expected discounted returns, $g$ will be close to all optimal policies $\bar h_i$ but not necessarily perform on the average of the expected discounted returns as in \eqref{eq:SinglePolicyProblem}.

For values of $\epsilon$ sufficiently large,  the constraint in \eqref{eq:originalProblemCrosslearning} becomes inactive, rendering problems  \eqref{eq:DistributedProblemFormulation} and \eqref{eq:originalProblemCrosslearning} equivalent. Problem \eqref{eq:DistributedProblemFormulation} can be seen as the unconstrained version of problem \eqref{eq:originalProblemCrosslearning}. Likewise, setting $\epsilon=0$ forces a consensus constraint in which all policies $h_i$ are equal one another, yielding problems \eqref{eq:SinglePolicyProblem} and \eqref{eq:originalProblemCrosslearning} equivalent. Otherwise the $h_i$ policies are brought closer together at the cost of reducing each task's utility, in exchange of better generalization performance. 

With the fully distributed formulation \eqref{eq:DistributedProblemFormulation} each agent achieves  a specific-task,  adjusting its policy  $h_i$ separately to optimize its own  expected cumulative reward $U_i$ as in \eqref{eq:RLutility}. In this way, different agents optimize for their own rewards, being agnostic of other agents and their tasks. In contrast, the cross-learning formulation \eqref{eq:originalProblemCrosslearning} couples the rewards through the centrality constraint, bringing policies together and causing the reward information of an agent to spread to the policies of all other agents. The bound $\epsilon$ is a bargaining parameter that tunes how much the goals of others affect each agent policy.

So far we have assumed that the expectations that define each $U_i$ are computable. Otherwise, agents can solve problems \eqref{eq:DistributedProblemFormulation} and \eqref{eq:originalProblemCrosslearning} via  stochastic RL algorithms as given in Section \ref{sec:RL} for the case when the rewards are not available in closed-form and policies are learned from data. Various learning algorithms exist, but they share in common that the policies are improved sequentially by incorporating reward data that the agents collect while they take actions. An agent running an RL algorithm for \eqref{eq:DistributedProblemFormulation} would learn from its own task-specific reward data. Instead, our cross learning formulation yields an RL algorithm in which data acquired  by all agents across tasks percolates through the centrality constraint. Collecting data from multiple sources enables  enhanced learning performance if these sources are correlated, specially at early stages of the learning process when the amount of data collected per task is limited. In addition, the common policy $g$ can also be used as an after-training policy. This would be the case of an algorithm that, after being trained for multiple tasks, needs to quickly adapt to one of these tasks without necessarily knowing which one. In this sense, this is similar to finding a good initialization point of a learning algorithm for a family of tasks $\{U_i\}$.

\section{Stochastic Policy Gradient}
\label{sec:RL}

Now that we have formulated the cross-learning problem we will attempt to solve it. Two immediate issues need to be addressed before proceeding to obtain the solution to problem \eqref{eq:originalProblemCrosslearning}. First, we will resort to gradient ascent on the objective of the problem, the form of which is given by the policy gradient theorem \cite{sutton2000policy}. Secondly, we need to address the constraints. While we could resort to primal-dual methods\cite{paternain2019constrained, paternain2019safe} to overcome this issue, we prefer to exploit the structure of the set of constraints which form a convex set. Hence, we can compute the projection to this set at each step of the gradient ascent. In this section, we will  focus on the computation of the gradient leaving the projection step for next section. Ignoring the constraints $||h_i-g|| \leq \epsilon$, the update step (pre-projection) for policy $h_i$ at iteration $k$ is given by
\begin{align}
\label{eq:learning_step_deterministic}
	\tilde{h}_{i}^{k}=h_i^k+\eta \nabla_{h_i} U_i (h_i^k) , 
\end{align} 
where $\eta \in (0,1)$ corresponds to the step size. Upon defining the Q-function $Q_i:\calS\times \cal A\to \R$
\begin{align}
\label{eq:q_function}
Q_i(s,a;h_i) \triangleq \E\left[\sum_{t=0}^\infty \gamma^t r_i(s_{i,t},a_{i,t})\Big| h_i, s_{i,0}=s, a_{i,0}=a\right],
\end{align}
it  is shown in \cite{sutton2000policy},\cite{lever2015modelling} that
\begin{align}\label{eq:nabla_U}
     &\left.\nabla_{h_i} U_i(h_i)\right|_{s=s'}= \nonumber\\
&\frac{1}{1-\gamma}\mathbb{E}_{(s,a)\sim \rho_i}\left[ Q_i(s,a;h_i)\kappa(s,s')\Sigma^{-1}\left(a-h_i(s)\right) \Big| h_i\right],
\end{align}
where the expectation in \eqref{eq:nabla_U} is taken with respect to discounted occupation measure,
\begin{align}
\label{eqn_discounted_distribution}
\rho_i(s,a) \triangleq (1-\gamma)\sum_{t=0}^\infty \gamma^t p(s_{i,t}=s,a_{i,t}=a).
\end{align}

The probability density function in \eqref{eqn_discounted_distribution} depends on the transition probabilities of the system as well as the agent's policy. This can be shown by expanding the joint probability, which leads to the following expression.
\begin{align}
\label{eqn_discounted_distribution_expanded}
&\rho_i(s,a)=\frac{1}{1-\gamma} \sum_{t=0}^\infty \gamma^t\pi_{h_i}(a_{i,t}|s_{i,t}) \nonumber\\
&\quad \quad\quad \times \prod_{u=0}^{t-1} p(s_{i,u+1}|s_{i,u},a_{i,u})\pi_{h_i}(a_{i,u}|s_{i,u}) p(s_{i,0}).
\end{align} 
Since the agent is assumed to not have access to the state transition probabilities, the only available resources to the agent are the states, actions and rewards over a trajectory. This means that the expectation in $U_i (h_i)$ as well as its gradient $\nabla_{h_i} U_i(h_i)$ are not computable. However, we can construct an unbiased estimate $\hat{\nabla}_{h_i} U_i$ using the available states $s_{i,t}$, actions $a_{i,t}$ and rewards $r_{i,t}$. Two challenges arise in the search for such an estimate. First,  we need to estimate the Q-function $Q_i(s,a)$, and secondly, we need to sample from the distribution $\rho_i(s,a)$. 

An unbiased estimate $\hat Q_i(s,a, T)$ can be found by sampling a geometric variable $T$ with parameter $\gamma$, such that $P(T=0)=1-\gamma$, and adding $T$ consecutive rewards \cite{bertsekas1996NDP},\cite{paternain2018stochastic}  
\begin{align}
\label{eq:Qhat}
\hat Q_i(s,a, T)=\sum_{\tau=0}^T r_{i,\tau}.
\end{align}
These rewards are collected by an agent following a trajectory of $T$ steps, starting from state $s$ and action $a$, and using policy $h(s)$ for selecting actions henceforth. 

\begin{algorithm}[t]
	\caption{Unbiased estimate  for the gradient of $U_i$.}
	\label{alg:unbiased}
	\begin{algorithmic}[1]
		\State  Draw $t,T\sim\textrm{Geom}(\gamma)$
		\State  Run the system up to time $t$
		\State  Save $s_{i,t}$
		\State  Compute $a_{i,t}=h(s_{i,t})+n_{i,t}$
		\State  Compute $\hat Q_{i,t}=\hat Q_i(s_{i,t},a_{i,t}, T)$
		\State  Compute $\hat w_{i,t}=\Sigma^{-1}(a_{i,t}-h_i(s_{it}))(\hat Q_{i,t})/(1-\gamma)$
		\State  Obtain $\hat{\nabla}_{h_i} U_i= \Kappa(s_{i,t},\cdot)\hat w_{i,t}$
	\end{algorithmic}
\end{algorithm}

Next we use \eqref{eq:Qhat} to find an unbiased estimate of the gradient sampling from \eqref{eq:nabla_U}, i.e., 
\begin{align}\label{eq:hat_nabla_U}
&\left.\hat \nabla_{h_i} U_i(h_i)\right|_{s=s'}=\frac{1}{1-\gamma}\hat  Q_i(s,a,T)\kappa(s,s')\Sigma^{-1}\left(a-h_i(s)\right).
\end{align}
For \eqref{eq:hat_nabla_U} to be unbiased, the pair $(s,a)$ must be sampled from the distribution in \eqref{eqn_discounted_distribution}. The procedure to find such a pair  is shown in Steps $1-4$ of Algorithm \ref{alg:unbiased}. Starting from  $(s_{i,0},a_{i,0})$ and selecting the actions according to $h_i(s)$, the agent runs $t$ steps of a trajectory  with $t$ drawn  from the same  geometric distribution again. The final pair  $(s,a)=(s_{i,t},a_{i,t})$ follows the desired distribution $\rho(s,a)$  \cite{paternain2018stochastic}.

The remaining steps of Algorithm \ref{alg:unbiased} are simply an ordered list of instructions to obtain   $\hat \nabla_{h_i} U_i$ in \eqref{eq:hat_nabla_U}.
Intuitively, selecting a geometric random variable yields a future time instant such that $s_{i,t}$ is representative of the whole trajectory.  The weight $\hat w_{i,t}$ represents a stochastic sample of the gradient around $s_{i,t}$.  The estimate includes a kernel function $\kappa(s_{i,t},s')$ which interpolates the gradient to nearby states $s'$.
This results in the unbiased estimate of $\hat{\nabla}_{h_i} U_i$ as stated in the following proposition, which is brought  from \cite{paternain2018stochastic}.
\begin{proposition}\label{prop:stochastic_gradient}
The stochastic gradient $\hat{\nabla}_{h_i} U_i$ is unbiased; i.e.,  $\E\bigl[\hat{\nabla}_{h_i} U_i\bigr]=\nabla_{h_i} U_i$.
\end{proposition}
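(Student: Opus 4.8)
The plan is to verify unbiasedness by taking the expectation of the estimator in \eqref{eq:hat_nabla_U} over all of its sources of randomness and matching it, term by term, with the closed-form policy gradient in \eqref{eq:nabla_U}. The estimator $\hat\nabla_{h_i}U_i$ is built from three ingredients: the time index $t\sim\mathrm{Geom}(\gamma)$ used to produce the pair $(s,a)=(s_{i,t},a_{i,t})$, the horizon $T\sim\mathrm{Geom}(\gamma)$ appearing in $\hat Q_i(s,a,T)$, and the exploration noise along the trajectory. The structural fact that makes the argument work is that, once the pair $(s,a)$ is fixed, the factor $\kappa(s,s')\Sigma^{-1}\bigl(a-h_i(s)\bigr)$ is deterministic, so the conditional expectation factorizes and the only remaining randomness in the product is carried by $\hat Q_i(s,a,T)$.

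I would then establish two facts. First, that $(s_{i,t},a_{i,t})$ with $t\sim\mathrm{Geom}(\gamma)$ is distributed according to the discounted occupation measure $\rho_i$: since $P(t=k)=(1-\gamma)\gamma^k$, averaging the $k$-step law $p(s_{i,k}=s,a_{i,k}=a)$ over $k$ reproduces exactly the series \eqref{eqn_discounted_distribution} defining $\rho_i(s,a)$; this is what Steps~1--4 of Algorithm~\ref{alg:unbiased} implement, following \cite{paternain2018stochastic}. Second, that $\E\bigl[\hat Q_i(s,a,T)\mid s,a\bigr]=Q_i(s,a;h_i)$: conditioning on $T$ and using $P(T\geq \tau)=\gamma^\tau$, one interchanges the expectation over $T$ with the sum of rewards to obtain $\sum_{\tau=0}^\infty \gamma^\tau\,\E[r_{i,\tau}\mid s_{i,0}=s,a_{i,0}=a]$, which is precisely $Q_i(s,a;h_i)$ by \eqref{eq:q_function}. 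Here it matters that the rewards entering $\hat Q_i$ are generated along a fresh trajectory started at $(s,a)$ under policy $h_i$, hence independent of the draw of $t$.

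To finish, I would assemble these pieces via the tower property in the appropriate order: take $\E[\hat\nabla_{h_i}U_i]$, condition first on $(s,a)$, pull the deterministic kernel factor outside the inner expectation, invoke the second fact to replace $\hat Q_i$ by $Q_i$, and then use the first fact to recognize the outer average over $(s,a)$ as $\E_{(s,a)\sim\rho_i}[\cdot]$. What remains is exactly $\tfrac{1}{1-\gamma}\E_{(s,a)\sim\rho_i}\bigl[Q_i(s,a;h_i)\kappa(s,s')\Sigma^{-1}(a-h_i(s))\bigr]$, i.e. $\nabla_{h_i}U_i$ as given in \eqref{eq:nabla_U}.

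The main obstacle is not any single computation but keeping the conditioning rigorous: one has to argue that $t$, $T$, and the post-$(s,a)$ trajectory are mutually independent so that the tower property applies in the stated order, and one has to justify the Fubini-type interchange in the second fact through a boundedness assumption on the reward (recorded in Section~\ref{subsec:asumptions}), which guarantees the relevant series converges absolutely. Beyond this the proof is bookkeeping; indeed the statement is essentially imported from \cite{paternain2018stochastic}, and the role of this proposition in the paper is to license using $\hat\nabla_{h_i}U_i$ as a stochastic ascent direction in the iterations of Section~\ref{sec:RL}.
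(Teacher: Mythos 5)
Your proposal is correct: the two facts you isolate (that $t\sim\mathrm{Geom}(\gamma)$ makes $(s_{i,t},a_{i,t})$ distributed as $\rho_i$, and that $\E[\hat Q_i(s,a,T)\mid s,a]=Q_i(s,a;h_i)$ via $P(T\geq\tau)=\gamma^\tau$), combined through the tower property with the Markov property ensuring the post-$(s,a)$ rewards are conditionally independent of $t$, is exactly the argument underlying the result. The paper itself does not spell this out — it simply imports the proof from \cite{paternain2018stochastic} — so your reconstruction matches the cited derivation rather than departing from it.
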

\begin{proof}
The proof can be found in \cite{paternain2018stochastic} for a generalized version of Algorithm \ref{alg:unbiased} with two-sided weights.
\end{proof}
Using this result we run a stochastic gradient ascent iteration, given by the following expression
\begin{align}\label{eq:learning_step_deterministic}
\bar{h}_{i}^{k}=h_i^k+\eta^k \hat{\nabla}_{h_i} U_i (h_i^k).
\end{align} 
For each step, a new element $\eta^k \hat{ \nabla}_{h_i} U_i$ is added to policy $h_i$ in order to obtain policy $\bar{h}^k_i$. More specifically, the dictionary and weights  of policy $\bar{h}_i$ have new  elements, $s_{i,t}$ and $w_{i,t}=\eta \hat w_{i,t}$, respectively. State $s_{i,t}$ stands as a  knot that specifies a new kernel which is combined into the policy. This procedure presents a major drawback due to memory explosion, with infinite kernels expanding the policy.  This problem is addressed by the pruning procedure discussed in the Appendix \ref{app:cdkomp}. The procedure is based on \cite{vincent2002kernel} and discards kernels that are not informative.

While  \eqref{eq:hat_nabla_U}  is provably unbiased    and  \eqref{eq:learning_step_deterministic}  has theoretical convergence guarantees \cite{paternain2018stochastic}, the variance of  \eqref{eq:hat_nabla_U}  may be high reflecting on a slow  speed of convergence of \eqref{eq:learning_step_deterministic}.  For these reasons, the literature on RL includes several practical improvements. Variance can be reduced by including batch versions of the  gradient method, in which several stochastic gradients are averaged before performing the update in \eqref{eq:learning_step_deterministic}. One particular case of a batch gradient iteration in   \cite{paternain2018stochastic}, averages two gradients sharing the same state $s_{i,t}$ but with actions $a_{it}$ and $a'_{it}$  in opposite directions. This results in a  gradient that favors the action with higher $\hat Q(s_{i,t},a)$. Similarly, using estimates of the advantage function $A(s,a)=Q(s,a)-V(s)$ with respect to the value function $V(s)=E_a[Q(s,a)]$  the gradient favors actions such that $Q(s,a)$ outperforms its average \cite{silver2014deterministic}.     
Besides these variants for estimating $\nabla_{h_i} U_i$, there are several flavors of the stochastic algorithm \eqref{eq:learning_step_deterministic} that combine subsequent gradients in the update for the purpose of introducing inertia. A prominent example of such an algorithm is the  ADAM optimizer  \cite{kingma2014adam}.

With  any of these RL variants being selected, we proceed to combine the stochastic gradients across tasks for a cross-learning algorithm that solve the problem \eqref{eq:originalProblemCrosslearning}.

\section{Projected Policy Gradient}
\label{sec:GeoInter}

In the previous section, we studied the required steps to compute the gradient update  for  the cost of the cross-learning problem \eqref{eq:originalProblemCrosslearning}. Subsequently, the policy that results from the gradient update needs to be projected to the  constraint set in \eqref{eq:originalProblemCrosslearning}. To this aim, we dedicate the section to this projection.

Let  $\{\bar{h}_i,\bar{g}\}$ denote the set of policies that result from the gradient step  in \eqref{eq:learning_step_deterministic}. Notice that since the  cross-learned policy $g$ is not a variable in the cost, it  remains unchanged after the gradient step. Let $\calP_{\calC}\left[\bar h_1,\ldots,\bar h_N,\bar g\right]$ denote the projection of $\bar h_1,\ldots,\bar h_N,\bar g$ into the set of constraints $\calC=\{h_i,g\in\calH:\ \|h_i-g\|\leq \epsilon\}$, which is defined by
\begin{subequations}
\begin{align}
  \calP_{\calC}\left[\{\bar h_i\},\bar g\right] =&\underset{\{h_i\},g}{\argmin}  \hspace{-.1cm} \quad    \sum_{i=1}^{N} \left\| h_i - \bar{h}_i \right\|^2 +\left\| g - \bar{g} \right\|^2   \\
   &\text{subject to} 		\quad    \left\| h_i - g \right\|^2 \leq \epsilon^2, \quad i=1,\ldots,N.
\end{align}
\label{eq:projectionProblemC}
\end{subequations}
Given $\calP_{\calC}$, the projected stochastic gradient ascent update for problem \eqref{eq:originalProblemCrosslearning} with step size $\eta^k$ takes the form
\begin{align}
(\{h_i^{k+1}\},g^{k+1})=\calP_{\calC}\left[\left\{ h^k_i+\eta^k \hat  \nabla_{h_i} U_i(h_i^k)\right\}, g^k\right],
\label{eq:PGD}
\end{align}
which takes into account that the gradient of  $U_i$ with respect to $g$ is null. The following lemma, which will be called upon throughout the section, establishes that the solution $h_i$  lays inline between  $\bar h_i$ and $g$, and establishes a useful equation.  
\begin{lemma}\label{lemma:inline}
If $(\bar g,\bar h_1,\ldots,\bar h_N)$ and $( g, h_1,\ldots, h_N)$ are the input and output of the projection operator  $\calP_{\calC}$ then 
\begin{align}\label{eq:convex_comb}
h_i&= \zeta_i \bar h_i +(1-\zeta_i)  g
\end{align}
with $\zeta_i \in [0,1]\subset \R,$
and 
\begin{align}
g-\bar g&=  \sum_{i=1}^N (\bar h_i -h_i).\label{eq:deltag}
\end{align}
\end{lemma}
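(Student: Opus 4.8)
The plan is to exploit the Karush–Kuhn–Tucker (KKT) conditions of the convex projection program \eqref{eq:projectionProblemC}. The objective is strongly convex and the constraints $\|h_i-g\|^2\le\epsilon^2$ are convex, so the minimizer is unique and the KKT conditions are necessary and sufficient. Introducing multipliers $\lambda_i\ge 0$ for each constraint, I would form the Lagrangian
\begin{align*}
\calL=\sum_{i=1}^{N}\|h_i-\bar h_i\|^2+\|g-\bar g\|^2+\sum_{i=1}^N\lambda_i\bigl(\|h_i-g\|^2-\epsilon^2\bigr),
\end{align*}
and set the Fréchet derivatives with respect to each $h_i$ and to $g$ in the Hilbert space $\calH$ to zero. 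Stationarity in $h_i$ gives $2(h_i-\bar h_i)+2\lambda_i(h_i-g)=0$, i.e. $(1+\lambda_i)h_i=\bar h_i+\lambda_i g$, hence $h_i=\tfrac{1}{1+\lambda_i}\bar h_i+\tfrac{\lambda_i}{1+\lambda_i} g$. Setting $\zeta_i=\tfrac{1}{1+\lambda_i}$ yields $h_i=\zeta_i\bar h_i+(1-\zeta_i) g$ with $\zeta_i\in(0,1]$ since $\lambda_i\ge 0$; this is exactly \eqref{eq:convex_comb}. (The endpoint $\zeta_i=1$ corresponds to an inactive constraint $\lambda_i=0$.)

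Next I would use stationarity in $g$. Differentiating $\calL$ with respect to $g$ gives $2(g-\bar g)-2\sum_{i=1}^N\lambda_i(h_i-g)=0$, so $g-\bar g=\sum_{i=1}^N\lambda_i(h_i-g)$. The remaining task is to rewrite $\lambda_i(h_i-g)$ in the claimed form. From the $h_i$-stationarity equation, $h_i-\bar h_i=-\lambda_i(h_i-g)$, i.e. $\lambda_i(h_i-g)=\bar h_i-h_i$. Substituting this into the $g$-equation gives $g-\bar g=\sum_{i=1}^N(\bar h_i-h_i)$, which is precisely \eqref{eq:deltag}. That is the whole argument; both identities drop straight out of the two stationarity conditions once the multiplier is eliminated via $\lambda_i(h_i-g)=\bar h_i-h_i$.

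The one point requiring a little care — and what I would flag as the main obstacle — is justifying the use of KKT stationarity in an infinite-dimensional RKHS rather than in $\R^n$. Concretely I should note that the feasible set $\calC$ is nonempty (e.g. $h_i=g=\bar g$ is feasible), closed, and convex, so the projection exists and is unique by strong convexity of the quadratic objective; and that a constraint qualification (Slater's condition) holds — taking $h_i=g$ for all $i$ gives $\|h_i-g\|^2=0<\epsilon^2$, strictly feasible — so strong duality and the existence of KKT multipliers $\lambda_i\ge0$ are guaranteed. The Fréchet derivative of $h\mapsto\|h-\bar h\|^2$ at $h$ in direction $v$ is $2\langle h-\bar h,v\rangle$, so the stationarity conditions above are the correct Hilbert-space analogues. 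Alternatively, and perhaps more cleanly for a self-contained argument, one can avoid duality theory entirely: fix the optimal $g$ and observe that each $h_i$ solves $\min_{\|h_i-g\|\le\epsilon}\|h_i-\bar h_i\|^2$, i.e. $h_i=\calP_{B(g,\epsilon)}[\bar h_i]$, the metric projection onto a ball, which is explicitly $h_i=g+\min\{1,\epsilon/\|\bar h_i-g\|\}(\bar h_i-g)$ — directly giving \eqref{eq:convex_comb} with $\zeta_i=\min\{1,\epsilon/\|\bar h_i-g\|\}$. Then \eqref{eq:deltag} follows by taking the first-order optimality condition of \eqref{eq:projectionProblemC} in $g$ alone, using $\|h_i-g\|^2$ as a function of $g$ through both its explicit dependence and through $h_i(g)$, but the envelope theorem (or the fact that $h_i$ is optimal, so the partial derivative through $h_i$ vanishes) collapses it to $g-\bar g=\sum_i\lambda_i(h_i-g)=\sum_i(\bar h_i-h_i)$. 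I would present the duality-based version as the primary proof since it handles both identities uniformly, and mention the projection-onto-a-ball interpretation as the intuition behind "$h_i$ lies inline between $\bar h_i$ and $g$."
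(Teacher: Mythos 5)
Your proof is correct and follows essentially the same route as the paper: both write the Lagrangian of \eqref{eq:projectionProblemC} with one multiplier per constraint, use stationarity in $h_i$ to get $h_i=\tfrac{1}{1+\lambda_i}\bar h_i+\tfrac{\lambda_i}{1+\lambda_i}g$ (i.e.\ \eqref{eq:convex_comb}), and eliminate the multipliers via $\lambda_i(h_i-g)=\bar h_i-h_i$ in the $g$-stationarity condition to obtain \eqref{eq:deltag}. Your added care about Slater's condition and Fr\'echet differentiability in $\calH$, and the projection-onto-a-ball interpretation, are welcome extras but do not change the argument.
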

\begin{proof}
Writing the Lagrangian for \eqref{eq:projectionProblemC} with scalar multipliers $\mu_i\geq 0,\ i=1,\ldots,N$ and setting its derivatives with respect to $g$ and $h_i$ to zero, it yields
\begin{align}
h_i-\bar h_i+\mu_i (h_i-g)&=0\label{eq:nablaLhi}\\
g-\bar g+\sum_{i=1}^N \mu_i(g-h_i)&=0 \label{eq:nablaLg}. 
\end{align}
Equation \eqref{eq:nablaLhi} can be rearranged into $h_i=\frac{1}{1+\mu_i} \bar h_i +\frac{\mu_i}{1+\mu_i} g$, which coincides with \eqref{eq:convex_comb} for $\zeta_i=1/(1+\mu_i)$. Also, adding \eqref{eq:nablaLhi} over $i$ and substituting the sum in \eqref{eq:nablaLg} results in \eqref{eq:deltag}.
 \end{proof}     

\begin{figure}[t]
	\centering
    \includegraphics[scale=1]{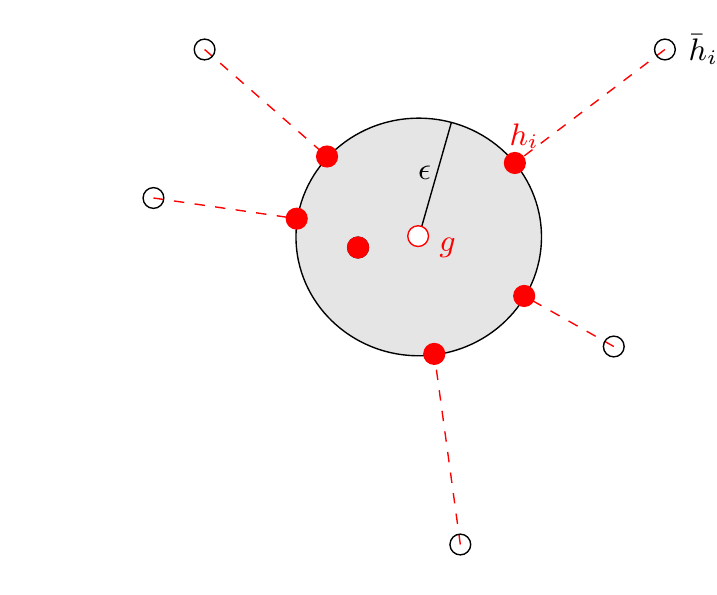}        	
	\caption{Illustration of the effect of the projection $\calP_{\calC}$.}
\label{fig:projC}
\end{figure}

Fig. \ref{fig:projC} shows the effect of  the projection $\calP_{\calC}$ in a simplified case. For the illustration  purpose of Fig. \ref{fig:projC}, replace the RKHS $\calH$ by the standard vector space  $\R^2$. In this case we solve \eqref{eq:projectionProblemC} and obtain that  vectors $h_i$ lay at the intersection of the line connecting  $\bar h_i$ with  $g$ (cf., \eqref{eq:convex_comb}), and the circumference of center $g$ and radius $\epsilon$. Policies $h_i$ are brought together close to the common center $g$, half-way between the fully separable version \eqref{eq:DistributedProblemFormulation} in which $h_i=\bar h_i$, and the strict solution \eqref{eq:SinglePolicyProblem} in which all the $h_i$ would be forced to be equal.

Returning to the  RKHS case in which  $g, h_i\in \calH$, the following proposition demonstrates  how to project $\bar h_i$ and $\bar g$ into $\calC$ by solving a standard  Quadratically Constrained Quadratic Program (QCQP). We focus on the case $p=1$ since a generalization is simple but the notation more involved. The following  assumption, which   will be satisfied by the RL algorithm, states that the $\bar h_i$ are finite dimensional, they share $L$ kernels $s_1,\ldots,s_L$ with $\bar g$, and differ  in a unique kernel $s_{L+i}$.

\begin{assumption}
\label{assumption:kernels}
Functions $\bar g$ and $\bar h_i$ are given by finite sums of weighted kernels with the following structure $\bar h_i=\sum_{m=1}^{L+N} \bar a_{mi} \kappa(s_m,s)$,  where  $\bar a_{mi}=0$ for $m>L,\ m\neq L+i$ and  $\bar g=\sum_{m=1}^{L+N} \bar c_{m} \kappa(s_m,s)$ where $\bar c_m=0$ for $m>L$.
\end{assumption}

 For notation brevity, collect the coefficients  $\bar a_{mi}$ and $\bar c_m$ in  a matrix $\bar A\in\R^{(L+N)\times N}$ and a vector $\bar c\in\R^{(L+N)}$, respectively. Also define  $1_N\in \R^N$ as the vector of all ones, and  $K\in\R^{(L+N)\times(L+N)}$ as the kernel Gram matrix with elements $\kappa(s_m,s_{m'})$. Under this definitions the projection can be solved as a QCQP, as detailed in the following proposition.
\begin{proposition}\label{prop:solution_projC}
If  $\bar h_1,\ldots,\bar h_N, \bar g\in \calH$ satisfy Assumption \ref{assumption:kernels}, then  the solution to \eqref{eq:projectionProblemC}   is given by 
	\begin{align}
 h_i&=\sum_{m=1}^{L+N}  a_{mi} \kappa(s_m,s),\quad 
 g=\sum_{m=1}^{L+N}  c_{m} \kappa(s_m,s),\label{eq:handg}
\end{align}
where the coefficients $a_{mi}$, $ c_{m}$, are the entries of the matrix $A\in\R^{(L+N)\times N}$ and vector $c\in\R^{(L+N)}$  that solve the following problem
\begin{subequations}
\begin{align}
   \underset{
   \begin{subarray}{c}
  		A \in \R^{(L+N)\times N} \\
  		c\in\R^{L+N} \vspace{-0.5cm}
  	\end{subarray}    \hspace{-.4cm} 
   }
{\textup{minimize}}  \quad   & \tr\left({(\bar A-A)^T K (\bar A-A)}\right)+(\bar c-c)^T K (\bar c-c)  \\ 
   \textup{subject to} 
		\quad   & \hspace{-.1cm} \diag\left((A-c1_N^T)^TK(A-c1_N^T)\right)\leq \epsilon^2 1_N . \hspace{-.15cm} 
\end{align}
\label{eq:QCQP}
\end{subequations}

\end{proposition}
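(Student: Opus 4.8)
The plan is to proceed in three steps: first, show that the minimizer of \eqref{eq:projectionProblemC} is supported on the finite dictionary $\{s_1,\ldots,s_{L+N}\}$; second, rewrite the objective and the constraints of \eqref{eq:projectionProblemC} in terms of the expansion coefficients using the reproducing property; and third, observe that the resulting finite-dimensional problem is precisely the QCQP \eqref{eq:QCQP}.

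For the first step I would invoke Lemma \ref{lemma:inline}. Substituting \eqref{eq:convex_comb} into \eqref{eq:deltag} gives
\begin{align}
g-\bar g=\sum_{i=1}^N(1-\zeta_i)(\bar h_i-g),
\end{align}
so that, with $S\triangleq\sum_{i=1}^N(1-\zeta_i)\ge 0$, one obtains $(1+S)\,g=\bar g+\sum_{i=1}^N(1-\zeta_i)\bar h_i$; since each $\zeta_i\in[0,1]$, this exhibits $g$ as a convex combination of $\bar g,\bar h_1,\ldots,\bar h_N$. By Assumption \ref{assumption:kernels} all of these lie in $V\triangleq\spn\{\kappa(s_1,\cdot),\ldots,\kappa(s_{L+N},\cdot)\}$, hence $g\in V$, and then \eqref{eq:convex_comb} places each $h_i$ in $V$ as well. (Equivalently, a representer-theorem argument works: for the orthogonal decomposition $\calH=V\oplus V^\perp$ write $h_i=h_i^\parallel+h_i^\perp$, $g=g^\parallel+g^\perp$; zeroing the $V^\perp$ components cannot increase the objective in \eqref{eq:projectionProblemC} and can only relax the constraints, so by uniqueness of the projection those components vanish.) In either case the minimizer has the form \eqref{eq:handg}.

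For the second step, write $h_i=\sum_{m=1}^{L+N}a_{mi}\kappa(s_m,\cdot)$ and $g=\sum_{m=1}^{L+N}c_m\kappa(s_m,\cdot)$, and let $a_{\cdot i}$ and $\bar a_{\cdot i}$ denote the $i$-th columns of $A$ and $\bar A$. The reproducing property $\langle\kappa(s_m,\cdot),\kappa(s_{m'},\cdot)\rangle=\kappa(s_m,s_{m'})=K_{mm'}$ yields
\begin{align}
\|h_i-\bar h_i\|^2=(a_{\cdot i}-\bar a_{\cdot i})^T K(a_{\cdot i}-\bar a_{\cdot i}),\qquad \|g-\bar g\|^2=(c-\bar c)^T K(c-\bar c),
\end{align}
and summing the first identity over $i$ converts the objective of \eqref{eq:projectionProblemC} into $\tr\bigl((\bar A-A)^T K(\bar A-A)\bigr)+(\bar c-c)^T K(\bar c-c)$. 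Likewise $\|h_i-g\|^2=(a_{\cdot i}-c)^T K(a_{\cdot i}-c)$, which is the $i$-th diagonal entry of $(A-c1_N^T)^T K(A-c1_N^T)$, so stacking the $N$ constraints gives $\diag\bigl((A-c1_N^T)^T K(A-c1_N^T)\bigr)\le\epsilon^2 1_N$. This is exactly \eqref{eq:QCQP}; it is a genuine QCQP because $K$, being a Gram matrix, is positive semidefinite, so both the objective and the constraint functions are convex quadratics. Since \eqref{eq:projectionProblemC} is the projection of a point onto the nonempty closed convex set $\calC$, its solution exists and is unique, and by the first step it coincides with the pair of functions obtained from any minimizer of \eqref{eq:QCQP} through \eqref{eq:handg}, which establishes the claim.

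I expect the only genuinely non-mechanical step to be the first one, namely certifying that the projected policies remain in the span of the existing dictionary; steps two and three are routine bookkeeping with the kernel Gram matrix. One should also note in passing that if $K$ is singular the coefficient representation of a given function need not be unique, but this does not affect the statement: what is asserted is that \emph{some} minimizer of \eqref{eq:QCQP} reconstructs the (unique) projection via \eqref{eq:handg}.
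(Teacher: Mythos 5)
Your proposal is correct and follows essentially the same route as the paper's proof: it invokes Lemma \ref{lemma:inline} to conclude that $g$ and the $h_i$ lie in the span of the existing dictionary, and then expands the norms in \eqref{eq:projectionProblemC} via the reproducing property to arrive at \eqref{eq:QCQP}. The extra details you supply (the explicit convex-combination computation, the representer-theorem alternative, and the remark on a singular Gram matrix $K$) are sound elaborations of the same argument rather than a different approach.
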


\begin{proof}It follows from Lemma \ref{lemma:inline} that $h_i$ and $g$ are linear combinations of  $\bar h_i$ and $\bar g$, and thus they   admit the finite dimensional expansions in \eqref{eq:handg}. 
The QCQP in \eqref{eq:QCQP} follows from expanding the norms in \eqref{eq:projectionProblemC} according to \eqref{eq:handg}.
\end{proof}

\subsection{Relaxed Projection}
\label{subsec:relaxed}

As per Proposition \ref{prop:solution_projC}, in order to solve the cross-learning problem \eqref{eq:originalProblemCrosslearning}, we need to solve a QCQP at each gradient step, which may become computationally expensive for policies represented with a large amount of kernels. For a simplified closed-form solution to the projection \eqref{eq:projectionProblemC}, we consider the following relaxation of the cross-learning problem \eqref{eq:originalProblemCrosslearning} which we denote as \emph{relaxed cross-learning},
	\begin{subequations}
\begin{align*}
\tag{RCL}
\label{eq:relaxedProblemCrosslearning}
   \underset{h_i,g\in\calH}{\text{maximize}}  \quad   & \sum_{i=1}^{N}  U_i \left(h_i\right)  \\
   \text{subject to}
		\quad   & \sum_{i=1}^N \left\| h_i - g \right\|^2 \leq N \epsilon^2.
\end{align*}
\end{subequations}
The main difference between the original cross-learning problem \eqref{eq:originalProblemCrosslearning} and its relaxed counterpart \eqref{eq:relaxedProblemCrosslearning} is the number of constraints. In the relaxed cross-learning problem \eqref{eq:relaxedProblemCrosslearning} there is only one constraint, which is specified as the average over the $N$ constraints of the original problem. This results in the reduction to a single Lagrange multipliers, which renders a closed-form solution that can be computed exactly without the need of solving an optimization problem numerically. The associated projection of the relaxed cross-learning problem is,
\begin{subequations}
\begin{align}
  \calP_{\calR}\left[\{\bar h_i\},\bar g\right] =&\underset{h_i,g}{\argmin}  \quad    \sum_{i=1}^{N} \left\| h_i - \bar{h}_i \right\|^2  \\
  & \text{subject to}
		\quad       \sum_{i=1}^N \left\| h_i - g \right\|^2 \leq N \epsilon^2.
\end{align}
\label{eq:projectionProblemR}
\end{subequations}

Notice that we have omitted the term $\left\| g - \bar g \right\|^2$ in the cost of \eqref{eq:projectionProblemR}. This simplification plays an important role in obtaining a simpler closed form solution. Notice also that the set of constraints $\calR=\{h_i,g,\in\calH:\    \sum_{i=1}^N \left\| h_i - g \right\|^2 \leq N \epsilon^2\}$ contains the previous set $\calC$, namely, $\calR$ is a relaxation of $\calC$. 

\begin{figure}[t]
	\centering
    \includegraphics[scale=1]{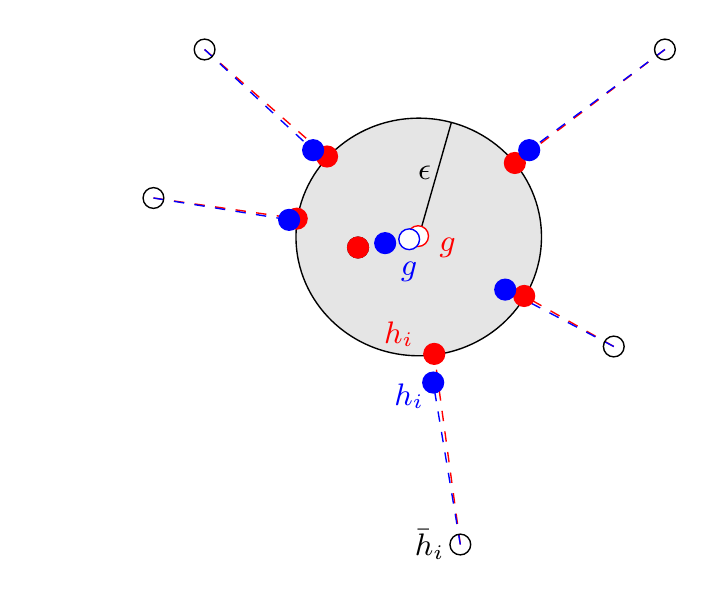}        		
	\caption{Solution of the relaxed projection $\calP_{\calR}$, shown in blue; together with the solution of the original projection $\calP_{\calC}$, shown in red.}
	\label{img:projR}
\end{figure}
 
In Fig. \ref{img:projR} we can see the solution of relaxed projection $\calP_\calR$ given by equations \eqref{eq:projectionProblemR} superimposed to the solution of the original projection $\calP_\calC$ given by equations \eqref{eq:projectionProblemC}. Notice that the centers $g$ resulting from both projections are shifted one from another, and that the points $h_i$ in the relaxed version are not at distance $\epsilon$ from the center policy, as it results when using the original projection $\calP_\calC$.  

Now, after qualitatively comparing the original projection $\calP_\calC$ to its relaxed version $\calP_\calR$ , we will delve into the closed-form solution of the latter. As given in \eqref{eq:projectionProblemR}, $\calP_\calR$ admits the following  closed-form solution. 

\begin{proposition}\label{prop:relaxed}
Given  $\bar h_1,\ldots,\bar h_N \in \calH$,  the solution of \eqref{eq:projectionProblemR} takes the form
\begin{align}
 g&=\frac{1}{N}\sum_{i=1}^{N}  \bar h_i,\quad   h_i=(1-\psi) g +\psi \bar h_i,
 \label{handgR}
\end{align}
where $ \psi=\min\left\{1,N\epsilon\left(\sum_{i,j=1}^N\|\bar h_i-\bar h_j\|^2 \right)^{-\frac{1}{2}}\right\}$.

\end{proposition}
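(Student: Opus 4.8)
The plan is to treat \eqref{eq:projectionProblemR} as a convex program and read its solution off the KKT conditions, exactly in the spirit of the proof of Lemma~\ref{lemma:inline}. The objective $\sum_i\|h_i-\bar h_i\|^2$ is jointly convex (strictly so in the $h_i$) and the single constraint $\sum_i\|h_i-g\|^2\le N\epsilon^2$ is convex in $(h_1,\dots,h_N,g)$, while Slater's condition holds since $h_i=g=0$ is strictly feasible for $\epsilon>0$; hence KKT is necessary and sufficient. Introducing one multiplier $\mu\ge0$, form $\mathcal L=\sum_i\|h_i-\bar h_i\|^2+\mu\bigl(\sum_i\|h_i-g\|^2-N\epsilon^2\bigr)$.

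Setting $\nabla_{h_i}\mathcal L=0$ gives $(1+\mu)h_i=\bar h_i+\mu g$, i.e.\ $h_i=\psi\bar h_i+(1-\psi)g$ with $\psi:=1/(1+\mu)\in(0,1]$, which is already the second relation in \eqref{handgR}. Setting $\nabla_g\mathcal L=0$ gives $\mu\sum_i(g-h_i)=0$; when the constraint is active ($\mu>0$) this forces $g=\tfrac1N\sum_i h_i$, and substituting the expression just obtained for $h_i$ and simplifying yields $g=\tfrac1N\sum_i\bar h_i=:\bar g$, the first relation in \eqref{handgR}. When $\mu=0$ we have $\psi=1$ and $h_i=\bar h_i$, and one still takes $g=\bar g$, which is feasible precisely when the unconstrained minimizer is.

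With $g=\bar g$ one has $h_i-g=\psi(\bar h_i-\bar g)$, so the constraint becomes $\psi^2\sum_i\|\bar h_i-\bar g\|^2\le N\epsilon^2$; complementary slackness (equality whenever $\mu>0$) then determines $\psi$ as the radial shrinkage factor that pulls the centered policies inside the ball, and rewriting $\sum_i\|\bar h_i-\bar g\|^2$ through the elementary identity $\sum_{i,j}\|\bar h_i-\bar h_j\|^2=2N\sum_i\|\bar h_i-\bar g\|^2$ puts $\psi$ in the displayed form, with the cap at $1$ covering the inactive case. It then remains only to verify that this candidate $(h_i,g,\mu)$ satisfies every KKT condition — stationarity by construction, primal feasibility by the choice of $\psi$, dual feasibility because $\psi\le1\Leftrightarrow\mu\ge0$, and complementary slackness because $\psi<1$ exactly on the active regime — so by convexity it is optimal, and strict convexity of the objective in the $h_i$ makes the $h_i$ unique.

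The only genuinely delicate point is that the cost is independent of $g$, so optimality alone does not pin $g$ down when the constraint is strictly slack; this is resolved by noting that $g=\bar g$ minimizes $\sum_i\|h_i-g\|^2$ at the optimal $h_i$ and is therefore always an optimal choice (and the unique one when the constraint is active), which makes the closed form unambiguous. Everything else is a one-line expansion of norms. An equivalent and perhaps slicker route is to argue first that $g=\tfrac1N\sum_i h_i$ without loss of generality, recognize the reduced feasible region as a ball of radius $\sqrt N\epsilon$ in the zero-mean subspace of $\calH^N$ (times its orthogonal complement), and obtain \eqref{handgR} directly as ``keep the mean component, shrink the centered component radially.''
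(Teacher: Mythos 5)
Your proof is correct and follows essentially the same route as the paper: stationarity of the single-multiplier Lagrangian gives $h_i=\psi\bar h_i+(1-\psi)g$ with $\psi=1/(1+\mu)$, the condition $\mu\sum_i(g-h_i)=0$ forces $g=\frac1N\sum_i\bar h_i$ in the active case, and the active/inactive dichotomy fixes $\psi$; your extra care about $g$ being undetermined when the constraint is slack is a welcome addition the paper's terse proof omits. One bookkeeping caveat: the active constraint correctly gives $\psi=\sqrt N\,\epsilon\bigl(\sum_i\|\bar h_i-g\|^2\bigr)^{-1/2}$, and the ordered-pair identity $\sum_{i,j}\|\bar h_i-\bar h_j\|^2=2N\sum_i\|\bar h_i-g\|^2$ that you invoke then yields $\sqrt2\,N\epsilon\bigl(\sum_{i,j}\|\bar h_i-\bar h_j\|^2\bigr)^{-1/2}$, which matches the displayed $\psi$ only if the double sum is read over unordered pairs $i<j$ — a constant-level looseness already present in the paper's own substitution $\|\bar h_i-g\|^2=\frac1N\bigl\|\sum_j(\bar h_i-\bar h_j)\bigr\|^2$, so your derivation, not the displayed constant, is the one to trust.
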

\begin{proof}
Equations \eqref{handgR} follow from setting to zero the derivatives of the Lagrangian for \eqref{eq:projectionProblemR} with multiplier $\mu=1/(1-\psi)$. Then  $\psi=1$ holds when the constraint of \eqref{eq:projectionProblemR} is inactive, and  $\psi=N\epsilon\left(\sum_{i,j=1}^N\|\bar h_i-\bar h_j\|^2 \right)^{-\frac{1}{2}}$    is obtained   by imposing the constraint to be active and substituting $\|\bar h_i -g\|^2 = \frac{1}{N} \left\| \sum_{j=1}^N(\bar h_i-\bar h_j)\right\|^2$.
\end{proof}

As in the case of the original projection $\calP_\calC$ (cf. Lemma \ref{lemma:inline}), the solutions $h_i$ of the relaxed cross-learning projection $\calP_\calR$ lie on the inline between $\bar h_i$ and $g$. 

\begin{remark}
From Propositions  \ref{prop:solution_projC} and \ref{prop:relaxed} it follows that the output of the projections, both the original one and its relaxed version,  renders policies that share the same kernels. This observation is  instrumental for algorithmic reasons, as it allows us to keep a common dictionary of kernels for all policies. Nevertheless, the size of such dictionary grows linear with the number of iterations. This problem must be handled adequately as otherwise it would lead to memory explosion. The details of this problem and the proposed solution are discussion on the following Section \ref{subsec:sparse_projection}.
\end{remark}

We finish this section by comparing the solutions of the original and relaxed projections, $\calP_\calC$ given by \eqref{eq:projectionProblemC} and $\calP_\calR$ given by \eqref{eq:projectionProblemR}, to corroborate the observations in Fig. \ref{img:projR}. Specifically, we prove that if the input center policy $\bar g$ is close to the mean of the input policies, then the two projected centers are close one each other and the difference between the individual policies is bounded.

\begin{proposition}\label{prop:centers}
	Let $(g,h_1,\ldots,h_N)$ and $(g',h'_1,\ldots,h'_N)$ be the solutions to problems \eqref{eq:projectionProblemC} and \eqref{eq:projectionProblemR}, respectively. Then, if $\|\bar g-g'\|\leq \epsilon$, it follows that $\|g-g'\|\leq \epsilon$. Moreover, the individual policies satisfy 
	\begin{align}
	\|h_i-h'_i\|&\leq\epsilon \left( 2 + \sqrt{ \frac{N y_i}{\sum_i y_i}}\right)\label{eq:prop_policies_deltah} 
	\end{align}
	where $y_i^2=\|\bar h_i -g'\|^2 = \frac{1}{N} \left\| \sum_{j=1}^N(\bar h_i-\bar h_j)\right\|^2$.   
\end{proposition}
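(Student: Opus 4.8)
The plan is to derive both bounds by decomposing $g-g'$ and $h_i-h_i'$ into sums of vectors whose lengths are already under control and then invoking the triangle inequality. The three facts I would lean on are: the closed form $g'=\frac1N\sum_i\bar h_i$ and $h_i'=(1-\psi)g'+\psi\bar h_i$ from Proposition~\ref{prop:relaxed}; the identity $g-\bar g=\sum_{i=1}^N(\bar h_i-h_i)$ from Lemma~\ref{lemma:inline}; and the feasibility of the output of $\calP_{\calC}$, namely $\|h_i-g\|\le\epsilon$ for every $i$, which holds simply because $(g,h_1,\ldots,h_N)$ satisfies the constraints of \eqref{eq:projectionProblemC}.

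For the first claim I would start from \eqref{eq:deltag}, write $\sum_i\bar h_i=Ng'$ and $\sum_i h_i=Ng+\sum_i(h_i-g)$, and substitute to get $g-\bar g=N(g'-g)+\sum_i(g-h_i)$. Adding $\bar g-g'$ to both sides and collecting the $g-g'$ terms yields the key identity
\[
(N+1)(g-g')=(\bar g-g')+\sum_{i=1}^N(g-h_i).
\]
The triangle inequality then bounds the right-hand side by $\|\bar g-g'\|+\sum_i\|g-h_i\|\le\epsilon+N\epsilon$, using the hypothesis for the first term and feasibility for the rest; dividing by $N+1$ gives $\|g-g'\|\le\epsilon$. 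The only real insight here is that pairing Lemma~\ref{lemma:inline} with the mean representation of $g'$ makes the single ``hypothesis slack'' and the $N$ ``feasibility slacks'' combine into exactly $(N+1)\epsilon$.

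For the second claim I would use the telescoping decomposition $h_i-h_i'=(h_i-g)+(g-g')+(g'-h_i')$ and bound the three pieces separately. The first is at most $\epsilon$ by feasibility, the second is at most $\epsilon$ by the claim just proved, and for the third I note $h_i'-g'=\psi(\bar h_i-g')$ by Proposition~\ref{prop:relaxed}, so $\|g'-h_i'\|=\psi\|\bar h_i-g'\|$. It then remains to check $\psi\|\bar h_i-g'\|\le\epsilon\sqrt{Ny_i/\sum_j y_j}$: when the relaxed constraint is active, $\psi$ equals the explicit expression in Proposition~\ref{prop:relaxed}, and substituting it — together with the definition of $y_i$ — gives the inequality; when it is inactive, $\psi=1$ and the same inequality holds because in that regime $\sum_j\|\bar h_j-g'\|^2\le N\epsilon^2$. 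Adding the three bounds produces \eqref{eq:prop_policies_deltah}.

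I expect the last step to be the one requiring care: one has to keep exact track of the normalization constant inside $\psi$ and of the identity relating $\sum_{j,k}\|\bar h_j-\bar h_k\|^2$ to $\sum_j\|\bar h_j-g'\|^2$, so that the bound on $\psi\|\bar h_i-g'\|$ comes out with precisely the coefficients claimed. Everything else is an essentially mechanical combination of the three facts listed above, with no compactness or optimization argument needed beyond the KKT relations already packaged into Lemma~\ref{lemma:inline} and Proposition~\ref{prop:relaxed}.
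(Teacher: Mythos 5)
Your argument is correct and follows essentially the same route as the paper: your identity $(N+1)(g-g')=(\bar g-g')+\sum_{i=1}^N(g-h_i)$ is exactly the paper's rearrangement of \eqref{eq:deltag} combined with $g'=\frac{1}{N}\sum_i\bar h_i$ and feasibility of $(g,h_1,\ldots,h_N)$, and the second bound uses the same three-term decomposition with $\|g'-h_i'\|=\psi\|\bar h_i-g'\|$, including the correct handling of the inactive case $\psi=1$. The only caveat, shared with the paper's own proof, is that substituting $\psi$ yields $\|g'-h_i'\|^2\leq N\epsilon^2 y_i^2/\sum_i y_i^2$, i.e.\ the ratio of \emph{squares}, which is the form that should be read under the square root in \eqref{eq:prop_policies_deltah}.
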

\begin{proof}
According to Lemma \ref{lemma:inline} $g$, $h_i$ and $\bar h_i$ are aligned, hence we can write  
\begin{align}
g&=h_i+\delta_i v_i\label{eq:prop_centers_g} 
\end{align}
where $v_i=(g-\bar h_i)/\|g-\bar h_i\|$ has norm one and  $\delta_i\leq \epsilon$, with equality if the constraint is active.
Define $u=\bar g-g'$ such that  $\|u\|\leq \epsilon$ by assumption.  
Combining \eqref{eq:deltag} with \eqref{eq:prop_centers_g}, together with $g'= \frac{1}{N}\sum_{i=1}\bar h_i$ and   $\bar g-g'=u$, it follows
\begin{align}
g-g'&=g-\bar g + \bar g -g'=\sum \bar h_i -\sum h_i +u\\
&= N g' -   \sum_i  (g -\delta_i v_i) +u \label{eq:prop_centers_deltag}. 
\end{align}
After rearranging terms, the triangle inequality yields
 $\|g-g'\|\leq \frac{1}{N+1} \left(\|u\| + \sum_i \delta_i \|v_i\| \right) \leq \epsilon$,  as desired. 

Using the bound $\|g-g'\|\leq \epsilon$  just found, we proceed to bound the difference between the individual policies.   
Adding and subtracting $g$ and $g'$ and then applying the triangle inequality we obtain
\begin{align}
\|h_i-h'_i\|&\leq \|h_i-g\| + \|g-g'\| +\|g'-h'_i\|.\label{eq:prop_policies_hi} 
\end{align}
The first term  satisfies $ \|h_i-g\|\leq \epsilon$ because $h_i$ and $g$ are feasible. The second term is also lower than $\epsilon$, since $\|g-g'\|\leq \epsilon$, as previously found. To conclude the proof, observe that according to \eqref{handgR}  $\|g'-h'_i\|^2=\psi^2 {y_i}^2\leq N\epsilon^2 \frac{y_i^2}{\sum_i y_i^2}$. 
\end{proof}

\section{Cross-learning Algorithm for Reinforcement Learning}
\label{sec:CLRLA}

By the combination of both the gradient step presented in Section \ref{sec:RL} and the projection discussed in Section \ref{sec:GeoInter}, we obtain the cross-learning algorithm. The resulting strategy proposed in this paper involves combining the information collected by all agents and projecting them into the set of constraints $\calC$. Furthermore, we substitute the gradient  ${\nabla}_{h_i} U_i$  in \eqref{eq:PGD} for a stochastic estimate $\hat{\nabla}_{h_i} U_i$ obtained by following Algorithm \ref{alg:unbiased}, which results in a stochastic version of the projected gradient ascent method; i.e., 
\begin{align}
(\{h_i^{k+1}\},g^{k+1})&=\calP_{\calC}\left[\left\{ h^k_i+\eta^k \hat{\nabla}_{h_i} U_i(h_i^k) \right\}, g^k\right].\label{eq:PSGA}
\end{align}

The overall resulting method is summarized in Algorithm \ref{alg:CLA}. The cross-learning procedure first collects all the stochastic gradients of the expected reward for each task following the task specific policy $\hat \nabla_{h_i}U_i$.  Then, the policies are brought together to the distance constraints by the projection $\calP_\calC$. The projection $\calP_{\calC}[\cdot]$ into set $\calC$ can be replaced by its relaxed counterpart  $\calP_{\calR}[\cdot]$ if computation resources are limited. Afterwards, Algorithm \ref{alg:CLA} has a step to reduce dictionary complexity, named Common Dictionary Kernel Orthogonal Matching Pursuit (CDKOMP), which we discuss next in Section \ref{subsec:sparse_projection}. The algorithm has a stopping condition given by $\langle \hat \nabla_{h_i}U_i (h_i^{k+1}), (h-h_i^{k+1}) \rangle\leq \alpha$.

\begin{algorithm}[t]
    \caption{Cross-learning algorithm}
    \label{alg:CLA}
    \begin{algorithmic}[1]
  	\State   Initialize $g(\cdot)=0,$ $h^0_i(\cdot)=0,\ i=1,\ldots,N$
	\Repeat \ for   $k=0,1,\ldots$
	\For {$i=1,\ldots,N$}
	\State  Obtain $\hat \nabla_{h_i}U_i$ via Algorithm \ref{alg:unbiased}.  
	\EndFor
	\State Project  $(\{h_i^{k+1}\},g^{k+1})=\calP_{\calC}\left[\{ h^k_i+\eta^k \hat \nabla_{h_i}U_i\}, g^k\right]$ 
	\State Reduce dictionary elements $\text{CDKOMP}(\{h_i^{k+1}\},g^{k+1})$	
   \Until  $\langle \hat \nabla_{h_i}U_i (h_i^{k+1}), (h-h_i^{k+1}) \rangle\leq \alpha,$ $\forall i, \ \forall h \in \calC$
  	\end{algorithmic}
  	
\end{algorithm}

\subsection{Reducing the Dictionary Complexity}
\label{subsec:sparse_projection}

The cross-learning algorithm suffers from a practical drawback that need to be addressed. New kernels are added per policy during each step of the gradient ascent algorithm, leading to memory explosion.  This issue arises due to the stochastic gradient produced by Algorithm \ref{alg:unbiased}, which results in the addition of a new kernel $\kappa(s_{i,t},\cdot)$ to the dictionary of policy $\pi_i$ per gradient step. Therefore, there are $N$ new kernels per itaration, one per policy. Furthermore, by performing linear combinations of all policies, the projection \eqref{eq:projectionProblemC} (or its relaxed counterpart \eqref{eq:projectionProblemR}) add all these  new knots to the dictionary of each single policy.

To overcome this limitation, we reduce the dictionary elements via Common Dictionary Kernel Orthogonal Matching Pursuit (CDKOMP). This procedure performs an additional projection to a lower dimension at each gradient step which both keeps the memory bounded and merges the policies dictionaries. The less informative kernels  in the dictionaries are discarded, keeping only  those that expand the current policies $h_i^k$ and $g^k$ up to a prescribed level of precision $\beta$. The algorithm, which is based on Kernel Orthogonal Matching Pursuit \cite{vincent2002kernel,koppel2019parsimonious}, induces bias that is bounded at all times. Specifically, the CDKOMP introduces bias terms $b_i^k$ to policies $h_i^{k+1}$ and $b^k_{N+1}$ to policy $g^{k+1}$ all bounded at all steps, i.e., 
\begin{align}
(\{h_i^{k+1}\},g^{k+1})&=\calP_{\calC}\left[\left\{ h^k_i+\eta^k \hat{\nabla}_{h_i} U_i(h_i^k) \right\}, g^k\right] 
\nonumber\\ &+ (\{b^k_i\},b_{N+1}^k),
\end{align}
with $||
b^k_i||<\beta, k=1,\dots,K, i=1,\dots,N+1$. An in-depth explanation of the procedure can be found in Appendix \ref{app:cdkomp}.

\section{Convergence of the Cross-learning Algorithm}
\label{sec:convergence}

In this section we provide  convergence guarantees of the cross-learning Algorithm \ref{alg:CLA} to a \emph{near-optimal} point with high probability. 

\subsection{Properties and Assumptions}
\label{subsec:asumptions}

As a base for convergence analysis, we introduce a set of assumptions that need to be satisfied for all tasks.

\begin{assumption}\label{As:Max_Grad_Deterministic}
The norm of the gradient of the expected discounted returns $\nabla_{h_i} U_i$ is upper bounded by a constant $\mu_i$, i.e., for any policy $f$
\begin{align}
||\nabla_{h_i} U_i (f)|| \leq \mu_i.
\end{align}	
\end{assumption}

\begin{assumption}\label{As:Grad_Lipschitz}
The gradients of the expected discounted return $\nabla _{h_i} U_i$ are Lipschitz continuous with constant $L_i$, i.e., for any two  policies $f_1$,$f_2$, 
\begin{align}
||\nabla_{h_i} U_i (f_1)-\nabla_{h_i} U_i (f_2)||\leq L_i ||f_1-f_2||. 
\end{align}
\end{assumption}

\begin{assumption}\label{As:Variance_Grad}
The expected norm of the square of the stochastic gradient of the expected discounted return $U_i$ is upper bounded, i.e., for any policy $f$
\begin{align}
\EX [||\hat{\nabla}_{h_i} U_i (f)||^2]\leq B^2_{U_i}.
\end{align}
\end{assumption}

\begin{assumption} \label{As:DifferenceGradients}
	Given a batch size number $b_{U_i}$ used to obtain the stochastic gradient of the expected discounted return $\nabla_{h_i} U_i$, its second moment is upper bounded, i.e., for any policy $f$
	\begin{align}
	\E[||\hat{\nabla}_{h_i}U_i(f) - \nabla_{h_i} U_i(f)||^2] \leq \frac {\sigma_{U_i}^2}{b_{U_i}}.
	\end{align}
\end{assumption}

Notice that all of these statements are standard assumptions when dealing with constrained non-convex stochastic optimization problems. Furthermore, they can be shown to be satisfied by  assuming that the reward functions $r_i(s, a)$ are upper bounded \cite{paternain2018stochastic} . 

Under these assumptions the proposed method takes the form of a nonconvex projected stochastic gradient ascent algorithm. Therefore, the following convergence analysis for cross-learning makes use of the usual approach to the analysis of these kinds of algorithms. Specifically, we follow an approach similar to that of \cite{mokhtari2018escaping}, with the resulting statements tailored to our framework, including the bias introduced by the CDKOMP Algorithm \ref{alg:MDKOMP}, the multi-task formulation and the lack of gradient update for the cross-learning policy $g$.

\subsection{Convergence Analysis}  

For convenience, we begin by defining the cross-learning function $F : \calH^N \to \real$, which corresponds to the objective function of the cross-learning problem \eqref{eq:originalProblemCrosslearning}. This function receives a vector of policies $\textbf{h}=[h_1,\dots,h_N]$ and returns the sum of the expected discounted returns of each policy $U_i(h_i)$. Namely,
\begin{align}\label{eq:def_F}
F(\textbf{h})=\sum_{i=1}^N U_i(h_i).
\end{align}
Further, we define the gradient of the the cross-learning function $F$ with respect to the policy vector $\textbf{h}$ is given by a vector of the gradients of each expected discounted return $U_i$ with respect to each policy $h_i$, i.e.,
\begin{align}
\nabla_{\textbf{h}} F=[\nabla_{h_1}U_1,\dots,\nabla_{h_N}U_N].
\end{align}
Now, let us recall the standard criterion for local maxima (also denoted as a first-order stationary point) over a convex set of constraints $\calC$.
\begin{proposition}\label{prop:LocalMaximum}
If $h_i^\star$ is a local maximum of the function $U_i$ over the convex set of constraints $\calC$, then
\begin{align}\label{LocalMaximum}
\langle \nabla_{h_i} U_i (h_i^\star), (h-h_i^\star)\rangle \leq 0, \quad \forall h \in \calC.
\end{align}	
\end{proposition}
\begin{proof}
See e.g., \cite[Proposition 2.1.2]{bertsekas1999nonlinear}.
\end{proof}
Notice that, under this condition, if the local maximum of the function $U_i$ is attained by a policy $h_i^\star$ inside of the convex set of constraints $\calC$, then the gradient must be zero, i.e. $\nabla_{h_i} U_i (h_i^\star)=0$. In the case that  $h_i^\star$ is not in the interior of the set of constraints $\calC$, then $\langle \nabla_{h_i}  U (h_i^\star), (h_i-h_i^\star)\rangle < 0$. 

Ideally, the objective would be for Algorithm \ref{alg:CLA} to achieve a local maximum, for the sum of expected discounted returns of all policies, namely, the cross-learning function defined in equation \eqref{eq:def_F}. However, there are practical issues that need to be taken into account. First, it may not be possible to converge to $h_i^\star$ in a finite number of steps, rendering problem \eqref{eq:originalProblemCrosslearning} unsolved. On the other hand, if an accuracy level $\alpha$ is admissible, then a neighborhood of the solution $h_i^\star$ may be reached in a moderate number of iterations, and the cross-learning Algorithm \ref{alg:CLA} can be stopped. Thus, we introduce the following definition.  
\begin{definition}\label{def:alphafosp}
Given $\alpha>0$,  $h_i^{K}$ is a $\alpha$-First-Order Stationary Point ($\alpha$-FOSP) of the function $U_i$ over the convex set $\calC$ if
\begin{align}\label{def:eqalphafosp1}
\langle \nabla_{h_i} U_i (h_i^{K}), (h-h_i^{K}) \rangle \leq \alpha, \quad \forall h \in \calC. 
\end{align}
\end{definition}

Notice that Definition \ref{def:alphafosp} is an $\alpha$-relaxation of the standard criterium of a local maximum, rendering equal solutions $h_i^{K}=h_i^\star$ if $\alpha=0$ (cf. Proposition \eqref{prop:LocalMaximum}). Still, the $\alpha$-FOSP condition is not readily met by the stopping condition of the cross-learning algorithm. Since we are unable to compute the value of the expected discounted returns $U_i$, nor obtain its gradient $\nabla_{h_i} U_i$,  Algorithm \ref{alg:CLA} utilizes an unbiased estimate of the gradient of $U_i$, namely $\hat{\nabla}_{h_i} U_i$, as the stopping condition. Hence, given $\alpha >0$, the cross-learning algorithm will stop at iteration $k=K$ if for all tasks, the following condition is met,  
\begin{align} \label{eq:stochasticalphafosp}
\langle \hat{\nabla}_{h_i} U_i (h_i^K), (h-h_i^K)\rangle \leq \alpha, \quad  \forall h \in \calC. 
\end{align}
Notice that equation \eqref{eq:stochasticalphafosp} is the stochastic version of equation \eqref{def:eqalphafosp1}, meaning that for condition \eqref{eq:stochasticalphafosp}, the stochastic gradient of $\nabla_{h_i} U_i $, given by $\hat{\nabla}_{h_i} U_i$ is used instead. Another way of expressing the stopping condition of is by its maximum inner product. Hence, the stopping condition of the cross-learning algorithm can be equivalently given by,
\begin{align}
	\max_{\underset{h \in \calC}{i \in \{1,\dots,N\}}}    \langle \hat{\nabla}_{h_i} U_i (h_i^K), (h-h_i^K)\rangle \leq \alpha.
\end{align} 
Notice that imposing the condition \eqref{eq:stochasticalphafosp} for all tasks $i=1,\ldots,N$ is not the same as using the cross-learning function $\langle \hat{\nabla}_{\textbf{h}} F(\textbf{h}^K),(\textbf{h}-\textbf{h}^K)\rangle \leq \alpha$, since we need to ensure that the condition is satisfied for each task. 

Now, the challenge arises when using the stochastic condition \eqref{eq:stochasticalphafosp} to stop the cross-learning algorithm, because the $\alpha$-FOSP condition \eqref{def:eqalphafosp1} is defined for  the  deterministic gradient. In the following, we show that if condition \eqref{eq:stochasticalphafosp} is met for a given $\alpha>0$ and a set of $i =1,\dots,N$ tasks by a sequence of $K$ iterates generated by Algorithm \ref{alg:CLA}, then with high probability the resulting policy $\textbf{h}^K$ is an $\alpha$-FOSP of problem \eqref{eq:originalProblemCrosslearning}. 

We start by showing that before the stopping condition is met, the expected difference between the cross-learning function $F$ at two subsequent iterations $\textbf{h}^k$ and $\textbf{h}^{k+1}$, with $k<K$, is lower bounded.

\begin{lemma}\label{Lemma:F_Diff_Lower_Bound}
Consider the iterates $\mathbf{h}^k$ generated by Algorithm \ref{alg:CLA} and the cross-learning function $F$ defined in \eqref{eq:def_F}. If Assumptions \ref{As:Max_Grad_Deterministic}-\ref{As:DifferenceGradients} hold then, for all iterations before the stopping time condition \eqref{eq:stochasticalphafosp} is met, i.e. with $k<K$,  the expected difference in the function $F$ between two subsequent iterations, $\textbf{h}^k$ and $\textbf{h}^{k+1}$ is lower bounded by
\begin{align}
\EX[F(\textbf{h}^{k+1})-F(\textbf{h}^{k})]\geq \gamma_F, \quad k<K,
\end{align}
where the constant $\gamma_F$ is given by 
\begin{align}
&\gamma_F=\alpha-3(N+1)\beta^2L_{max}- \sqrt{N+1}\beta\textstyle\sum_{i=1}^N\frac{\sigma_{U_i}}{\sqrt{b_{U_i}}} \nonumber\\
&- \sqrt{N+1}\beta\textstyle\sum_{i=1}^N \mu_i \nonumber- (3 \eta^2 L_{max}+\eta)(\textstyle\sum_{i=1}^N B_{U_i}^2)\nonumber\\
&-\eta\textstyle\sum_{i=1}^N \mu_i(\textstyle\sum_{i=1}^N \frac{\sigma_{U_i}}{\sqrt{b_{U_i}}})-\eta(\textstyle\sum_{i=1}^N \frac{\sigma_{U_i}}{\sqrt{b_{U_i}}})^2\nonumber\\
&-\sqrt{N+1}\beta(\textstyle\sum_{i=1}^N \frac{\sigma_{U_i}}{\sqrt{b_{U_i}}}),
\end{align}
where $\eta\triangleq\sup_k \eta_k$ and $L_{\max}\triangleq\max_i L_i$.
\end{lemma}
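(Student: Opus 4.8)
The plan is to follow the standard template for nonconvex projected stochastic gradient ascent, as in~\cite{mokhtari2018escaping}, adapted to the multi-task objective, the bias of CDKOMP, and the fact that $g$ receives no gradient step. Since each $U_i$ has $L_i$-Lipschitz gradient (Assumption~\ref{As:Grad_Lipschitz}) and depends only on $h_i$, the objective $F=\sum_i U_i$ is $L_{\max}$-smooth on $\calH^N$, which yields the descent inequality
\begin{align*}
F(\textbf{h}^{k+1})-F(\textbf{h}^k)\ \geq\ \langle \nabla_{\textbf{h}}F(\textbf{h}^k),\,\textbf{h}^{k+1}-\textbf{h}^k\rangle\ -\ \tfrac{L_{\max}}{2}\,\|\textbf{h}^{k+1}-\textbf{h}^k\|^2 .
\end{align*}
I would then write the update as $\textbf{h}^{k+1}=\textbf{h}^{k+1}_{P}+\textbf{b}^k$, where $\textbf{h}^{k+1}_{P}$ is the $\textbf{h}$-block of $\calP_\calC[\textbf{h}^k+\eta^k\hat{\nabla}_{\textbf{h}}F(\textbf{h}^k),g^k]$ and $\|b^k_i\|<\beta$, and use nonexpansiveness of $\calP_\calC$ together with the feasibility of the iterates up to the bounded bias to get $\|\textbf{h}^{k+1}_{P}-\textbf{h}^k\|\le \eta^k\|\hat{\nabla}_{\textbf{h}}F(\textbf{h}^k)\|+\sqrt{N+1}\,\beta$.

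For the quadratic term I would expand $\|\textbf{h}^{k+1}-\textbf{h}^k\|^2$ into a multiple of $(\eta^k)^2\|\hat{\nabla}_{\textbf{h}}F\|^2$ plus a multiple of $(N+1)\beta^2$, take conditional expectations, and apply $\EX\|\hat{\nabla}_{\textbf{h}}F\|^2=\sum_i\EX\|\hat{\nabla}_{h_i}U_i\|^2\le\sum_i B_{U_i}^2$ (Assumption~\ref{As:Variance_Grad}); this accounts for $3(N+1)\beta^2L_{\max}$ and part of $3\eta^2 L_{\max}\sum_i B_{U_i}^2$. For the linear term I would insert $\hat{\nabla}_{\textbf{h}}F$ and split
\begin{align*}
\langle \nabla_{\textbf{h}}F,\textbf{h}^{k+1}-\textbf{h}^k\rangle = \langle \hat{\nabla}_{\textbf{h}}F,\textbf{h}^{k+1}_{P}-\textbf{h}^k\rangle + \langle \hat{\nabla}_{\textbf{h}}F,\textbf{b}^k\rangle - \langle \hat{\nabla}_{\textbf{h}}F-\nabla_{\textbf{h}}F,\textbf{h}^{k+1}-\textbf{h}^k\rangle ,
\end{align*}
bounding the last two summands by Cauchy--Schwarz with $\|\textbf{b}^k\|\le\sqrt{N+1}\beta$, $\|\hat{\nabla}_{\textbf{h}}F\|\le\|\nabla_{\textbf{h}}F\|+\|\hat{\nabla}_{\textbf{h}}F-\nabla_{\textbf{h}}F\|$, $\|\nabla_{\textbf{h}}F\|\le\sum_i\mu_i$ (Assumption~\ref{As:Max_Grad_Deterministic}), and---via Proposition~\ref{prop:stochastic_gradient} and Jensen---$\EX\|\hat{\nabla}_{\textbf{h}}F-\nabla_{\textbf{h}}F\|\le\sum_i\sigma_{U_i}/\sqrt{b_{U_i}}$, $\EX\|\hat{\nabla}_{\textbf{h}}F-\nabla_{\textbf{h}}F\|^2\le(\sum_i\sigma_{U_i}/\sqrt{b_{U_i}})^2$ (Assumption~\ref{As:DifferenceGradients}). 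Expanding the resulting products reproduces the remaining pieces of $\gamma_F$: the two copies of $\sqrt{N+1}\beta\sum_i\sigma_{U_i}/\sqrt{b_{U_i}}$, the $\sqrt{N+1}\beta\sum_i\mu_i$ term, the $\eta\sum_i\mu_i\sum_i\sigma_{U_i}/\sqrt{b_{U_i}}$ term, the $\eta(\sum_i\sigma_{U_i}/\sqrt{b_{U_i}})^2$ term, and the $\eta\sum_i B_{U_i}^2$ term.

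The crux is the first summand, $\langle \hat{\nabla}_{\textbf{h}}F(\textbf{h}^k),\textbf{h}^{k+1}_{P}-\textbf{h}^k\rangle$, which I would lower bound by $\alpha$ for every $k<K$ using the variational characterization of $\calP_\calC$: since $(\textbf{h}^{k+1}_{P},g^{k+1}_{P})$ minimizes the squared distance to $(\textbf{h}^k+\eta^k\hat{\nabla}_{\textbf{h}}F(\textbf{h}^k),g^k)$ over $\calC$, it equivalently maximizes $\langle\hat{\nabla}_{\textbf{h}}F(\textbf{h}^k),\textbf{z}-\textbf{h}^k\rangle-\tfrac{1}{2\eta^k}(\|\textbf{z}-\textbf{h}^k\|^2+\|w-g^k\|^2)$ over $(\textbf{z},w)\in\calC$, so evaluating this objective at a feasible direction furnished by the failure of the stopping criterion~\eqref{eq:stochasticalphafosp} at iteration $k$ (which forces $\max_{h\in\calC}\langle\hat{\nabla}_{h_i}U_i(h_i^k),h-h_i^k\rangle>\alpha$ for some task $i$) delivers the bound. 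Finally I would add the three resulting bounds, take total expectation on the event $\{k<K\}$---which is measurable with respect to the history through step $k$, so the conditional bounds iterate cleanly---and collect coefficients to obtain $\EX[F(\textbf{h}^{k+1})-F(\textbf{h}^k)]\ge\gamma_F$. I expect the crux step to be the main obstacle: because the stopping criterion quantifies over \emph{all} feasible directions while $\calP_\calC$ couples the $h_i$'s with $g$, the comparison direction must be chosen carefully so that the quadratic penalty in the projection objective does not erode the constant $\alpha$; secondary care is needed for the filtration bookkeeping and for the $O(\beta)$ infeasibility of the iterates introduced by CDKOMP.
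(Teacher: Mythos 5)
Your outline reproduces the paper's bookkeeping for everything except the step that actually puts the $+\alpha$ into $\gamma_F$, and that step—which you yourself flag as the crux—is both missing and, as proposed, unworkable. The failure of the stopping rule at $k<K$ only guarantees that for \emph{some} task $i$ there is a feasible comparison point $h$ with $\langle\hat\nabla_{h_i}U_i(h_i^k),h-h_i^k\rangle>\alpha$. Plugging that point into the proximal (variational) characterization of $\calP_{\calC}$ charges you the penalty $\tfrac{1}{2\eta^k}\bigl(\|h-h_i^k\|^2+\cdots\bigr)$, and since the comparison point may sit at distance up to $2\epsilon$ from $h_i^k$, this route only yields $\langle\hat\nabla_{\textbf{h}}F(\textbf{h}^k),\textbf{h}^{k+1}_{P}-\textbf{h}^k\rangle\ \geq\ \alpha-O(\epsilon^2/\eta^k)$, which is vacuous for small step sizes. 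Worse, the bound you want cannot hold in the stated form: nonexpansiveness gives $\|\textbf{h}^{k+1}_{P}-\textbf{h}^k\|\leq\eta^k\|\hat\nabla_{\textbf{h}}F(\textbf{h}^k)\|$, so Cauchy--Schwarz and Assumption \ref{As:Variance_Grad} imply $\EX\langle\hat\nabla_{\textbf{h}}F(\textbf{h}^k),\textbf{h}^{k+1}_{P}-\textbf{h}^k\rangle\leq\eta^k\sum_{i}B_{U_i}^2$, which falls below any fixed $\alpha>0$ once $\eta^k$ is small; no choice of comparison direction inside the proximal problem can repair that, so the gap is structural to your plan, not a matter of "choosing the direction carefully."

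The paper obtains the $\alpha$ by a different (and blunter) device: it writes $F(\textbf{h}^{k+1})-F(\textbf{h}^k)=\langle\nabla_{\textbf{h}}F(\textbf{f}),\Delta\textbf{h}^k\rangle$ via the mean value theorem, singles out $j=\argmax_i\langle\hat\nabla_{h_i}U_i(h_i^k),\Delta h_i^k\rangle$, and invokes the negation of the stopping condition \eqref{eq:stochasticalphafosp} directly on that term to assert $\langle\hat\nabla_{h_j}U_j(h_j^k),\Delta h_j^k\rangle\geq\alpha$; that is, the $\alpha$ enters through the inner product with the \emph{realized} iterate increment, not through the variational characterization of \eqref{eq:projectionProblemC}. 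To recover Lemma \ref{Lemma:F_Diff_Lower_Bound} as stated you need that assertion (or an equivalent one tying the failed stopping test to the actual step direction). The rest of your plan—the smoothness/MVT inequality, the bias decomposition with $\|\textbf{b}^k\|\leq\sqrt{N+1}\beta$, nonexpansiveness of $\calP_{\calC}$, the $(a+b)^2\leq 3a^2+3b^2$ expansion, and taking expectations under Assumptions \ref{As:Max_Grad_Deterministic}, \ref{As:Variance_Grad}, and \ref{As:DifferenceGradients}—matches the paper's argument and does recover the remaining terms of $\gamma_F$.
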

\begin{proof}
See Appendix \ref{app:Lemma:F_Diff_Lower_Bound}. 
\end{proof}

The previous lemma shows that by using the iterates generated by Algorithm \ref{alg:CLA}, the cross-learning function $F$ is expected to increase at each step. Notice that, as the constant $\gamma_F$ (which determines the lower bound) depends on the step size of the gradient step $\eta^k$, the CDKOMP compression budget $\beta$ and the and the batch size $b_{U_i}$, it can be arbitrarily selected in order to ensure $\gamma_F>0$. Next, we use the result of Lemma \ref{Lemma:F_Diff_Lower_Bound} to show that the expected number of iterations $K$ needed to meet the stopping condition given by equation \eqref{eq:stochasticalphafosp} is necessarily upper bounded.
\begin{lemma}\label{Lemma:UpperBoundIterations}
Consider the iterates $\mathbf{h}^K$ generated by Algorithm \ref{alg:CLA} when satisfying the stopping condition given by \eqref{eq:stochasticalphafosp}, i.e., $\langle \hat{\nabla}_{h_i} U_i (h_i^K), (h-h_i^K)\rangle \leq \alpha$. Under Assumptions \ref{As:Max_Grad_Deterministic}-\ref{As:DifferenceGradients}, given initial policies $\textbf{h}^0$, the expected number of the iterations $K$ needed to reach the stopping condition \eqref{eq:stochasticalphafosp} is upper bounded by
\begin{align}
	\EX[K] \leq \frac{\EX[F(\textbf{h}^K)-F(\textbf{h}^0)]}{\gamma_F}.
\end{align}
where the constant $\gamma_F$ is given by Lemma \ref{Lemma:F_Diff_Lower_Bound}.
\end{lemma}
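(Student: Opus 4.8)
The plan is to derive the bound by a standard telescoping argument on the cross-learning function $F$, combining Lemma~\ref{Lemma:F_Diff_Lower_Bound} with the fact that the stopping time $K$ is a (random) integer. First I would observe that for every iteration $k$ strictly before the stopping time, i.e. $k = 0, 1, \ldots, K-1$, Lemma~\ref{Lemma:F_Diff_Lower_Bound} applies and gives $\EX[F(\textbf{h}^{k+1}) - F(\textbf{h}^k) \mid k < K] \geq \gamma_F$. Summing these per-step increments from $k=0$ to $k=K-1$ telescopes the left-hand side into $F(\textbf{h}^K) - F(\textbf{h}^0)$, while the right-hand side accumulates $\gamma_F$ exactly $K$ times. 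This yields, conditionally on the trajectory, $F(\textbf{h}^K) - F(\textbf{h}^0) \geq K \gamma_F$, at least in expectation once we are careful about the random index $K$.

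The delicate point, and the step I expect to be the main obstacle, is handling the randomness of $K$ correctly: $K$ is a stopping time adapted to the filtration generated by the stochastic gradients, so one cannot naively pull $K$ out of an expectation of a sum of $K$ terms. The clean way is to write $F(\textbf{h}^K) - F(\textbf{h}^0) = \sum_{k=0}^{\infty} \mathbf{1}\{k < K\}\bigl(F(\textbf{h}^{k+1}) - F(\textbf{h}^k)\bigr)$, take expectations, and use the tower property together with the fact that $\{k < K\} = \{k \leq K-1\}$ is measurable with respect to the information available at step $k$ (the decision to continue past step $k$ depends only on gradients observed up to and including step $k$). Then $\EX\bigl[\mathbf{1}\{k<K\}\bigl(F(\textbf{h}^{k+1})-F(\textbf{h}^k)\bigr)\bigr] = \EX\bigl[\mathbf{1}\{k<K\}\,\EX[F(\textbf{h}^{k+1})-F(\textbf{h}^k)\mid \mathcal{F}_k]\bigr] \geq \gamma_F\,\EX[\mathbf{1}\{k<K\}] = \gamma_F\, P(k<K)$, invoking Lemma~\ref{Lemma:F_Diff_Lower_Bound} inside the conditional expectation (which is exactly the form in which that lemma was stated, for each $k<K$). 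Summing over $k \geq 0$ and using $\sum_{k=0}^\infty P(k < K) = \sum_{k=0}^\infty P(K \geq k+1) = \EX[K]$ gives $\EX[F(\textbf{h}^K) - F(\textbf{h}^0)] \geq \gamma_F\,\EX[K]$.

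Finally I would divide through by $\gamma_F$, which is legitimate precisely because, as noted after Lemma~\ref{Lemma:F_Diff_Lower_Bound}, the free parameters $\eta$, $\beta$, and the batch sizes $b_{U_i}$ can be chosen so that $\gamma_F > 0$; under that choice the inequality rearranges to $\EX[K] \leq \EX[F(\textbf{h}^K) - F(\textbf{h}^0)]/\gamma_F$, which is the claim. I would also remark that $\EX[K]$ is finite because $F$ is bounded above (each $U_i$ is bounded, a consequence of the bounded-reward assumption underlying Assumptions~\ref{As:Max_Grad_Deterministic}--\ref{As:DifferenceGradients}), so $\EX[F(\textbf{h}^K) - F(\textbf{h}^0)]$ is finite and the bound is non-vacuous; a shorter alternative proof simply argues by contradiction that if $\EX[K]$ exceeded this ratio then the expected value of $F$ would have to surpass its supremum. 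The only real care needed is the measurability bookkeeping for the stopping time; everything else is routine telescoping.
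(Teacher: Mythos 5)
Your proposal is correct and follows essentially the same route as the paper: telescope $F(\textbf{h}^K)-F(\textbf{h}^0)$ into per-iteration increments, lower bound each increment via Lemma~\ref{Lemma:F_Diff_Lower_Bound}, sum to obtain $\EX[F(\textbf{h}^K)-F(\textbf{h}^0)]\geq \gamma_F\,\EX[K]$, and rearrange (with $\gamma_F>0$ ensured by the choice of $\eta$, $\beta$, $b_{U_i}$). The only difference is bookkeeping: the paper conditions on the value of $K$ and expands $\sum_t \Pr(K=t)$, while you use indicators $\mathbf{1}\{k<K\}$ with the tower property and $\sum_k \Pr(K>k)=\EX[K]$, which is an equivalent (and if anything slightly more careful) treatment of the random stopping index.
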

\begin{proof}
See Appendix \ref{app:Lemma:UpperBoundIterations}.
\end{proof}

By virtue of Lemma \ref{Lemma:UpperBoundIterations},  Algorithm \ref{alg:CLA} is guaranteed to stop after a finite number of iterations. By definition,  when the algorithm stops at iterate $K$, the resulting policies $\mathbf{h}^K$  meet the conditions in \eqref{eq:stochasticalphafosp}. Recall that these conditions are given with respect to the stochastic unbiased estimate of the gradient, $\hat{\nabla}_{h_i} U_i$. In order to guarantee that the policies  $\mathbf{h}^K$ are an $\alpha$-FOSP of problem  \eqref{eq:originalProblemCrosslearning} we need to guarantee the conditions of Definition \ref{def:alphafosp}, which involves the deterministic gradient $\nabla_{h_i} U_i$. In Theorem \ref{theo_HighProb} we state  that these conditions are guaranteed with \emph{high probability}.

\begin{theorem} \label{theo_HighProb}
Consider the iterates generated by Algorithm \ref{alg:CLA}. Under Assumptions \ref{As:Max_Grad_Deterministic}-\ref{As:DifferenceGradients}, the set of policies $\mathbf{h}^K$ generated at the stopping time is an $\alpha$-FOSP of problem \eqref{eq:originalProblemCrosslearning} with high probability. Namely, $\forall h \in \calC$
	\begin{align}\label{eq:final_eq}
	\Pr\left( \max_{i \in \{1,\ldots,N \} } \langle \nabla_{h_i} U_i (h_i^{K}), (h-h_i^{K}) \rangle \leq \alpha' \right) 
	\geq 1-\delta
	\end{align}
	where $\alpha'=\alpha + 2\epsilon\max_{i \in \{1,\ldots,N \}}  \sigma_{U_i} \sqrt{\frac{1}{\delta b_{U_i}}}$.
\end{theorem}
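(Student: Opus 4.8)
The plan is to bridge the gap between the stochastic stopping condition \eqref{eq:stochasticalphafosp} that actually terminates Algorithm~\ref{alg:CLA} and the deterministic $\alpha$-FOSP condition of Definition~\ref{def:alphafosp}. At the stopping time $K$, we know $\langle \hat{\nabla}_{h_i} U_i (h_i^K), (h-h_i^K)\rangle \leq \alpha$ for all $i$ and all $h\in\calC$. Writing $\nabla_{h_i} U_i(h_i^K) = \hat{\nabla}_{h_i} U_i(h_i^K) + e_i^K$ with $e_i^K \triangleq \nabla_{h_i} U_i(h_i^K) - \hat{\nabla}_{h_i} U_i(h_i^K)$ the gradient-estimation error, the inner product of interest splits as
\begin{align*}
\langle \nabla_{h_i} U_i (h_i^K), (h-h_i^K)\rangle = \langle \hat{\nabla}_{h_i} U_i (h_i^K), (h-h_i^K)\rangle + \langle e_i^K, (h-h_i^K)\rangle \leq \alpha + \langle e_i^K, (h-h_i^K)\rangle.
\end{align*}
By Cauchy--Schwarz, $\langle e_i^K, (h-h_i^K)\rangle \leq \|e_i^K\|\,\|h-h_i^K\|$. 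The key geometric observation is that $h-h_i^K$ can be bounded using feasibility: both $h$ and $h_i^K$ lie in $\calC$, so each is within $\epsilon$ of the respective central policy; more directly, since $h_i^K$ and the relevant comparison point satisfy the distance-$\epsilon$ constraint to $g^K$, one gets $\|h - h_i^K\| \leq 2\epsilon$ (or one restricts attention to the $h$ that realize the max, which lie on the sphere of radius $\epsilon$ about $g^K$). Thus $\langle \nabla_{h_i} U_i (h_i^K), (h-h_i^K)\rangle \leq \alpha + 2\epsilon\|e_i^K\|$.

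Next I would control $\|e_i^K\|$ in probability. Assumption~\ref{As:DifferenceGradients} gives $\E[\|e_i^K\|^2] \leq \sigma_{U_i}^2/b_{U_i}$, so Markov's inequality applied to $\|e_i^K\|^2$ yields, for each fixed $i$,
\begin{align*}
\Pr\!\left(\|e_i^K\| \geq \sigma_{U_i}\sqrt{\tfrac{1}{\delta' b_{U_i}}}\right) = \Pr\!\left(\|e_i^K\|^2 \geq \tfrac{\sigma_{U_i}^2}{\delta' b_{U_i}}\right) \leq \delta'.
\end{align*}
Choosing $\delta'$ appropriately (either $\delta' = \delta$ if one argues for the maximizing index, or $\delta' = \delta/N$ with a union bound over $i=1,\dots,N$, absorbing the constant into the final $\delta$) gives, with probability at least $1-\delta$, that $\|e_i^K\| < \sigma_{U_i}\sqrt{1/(\delta b_{U_i})} \leq \max_i \sigma_{U_i}\sqrt{1/(\delta b_{U_i})}$ simultaneously. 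Substituting back, on this high-probability event, $\max_i \langle \nabla_{h_i} U_i (h_i^K), (h-h_i^K)\rangle \leq \alpha + 2\epsilon \max_i \sigma_{U_i}\sqrt{1/(\delta b_{U_i})} = \alpha'$, which is exactly \eqref{eq:final_eq}.

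One subtlety to address carefully is that $K$ is a \emph{random} stopping time, so $e_i^K$ is not simply one of a fixed collection of error vectors; the bound in Assumption~\ref{As:DifferenceGradients} must be interpreted as holding conditionally on the history up to iteration $K$, or one invokes Lemma~\ref{Lemma:UpperBoundIterations} to argue $K$ is almost surely finite and then applies the per-iteration conditional bound to the realized iterate. I expect this measurability/stopping-time issue — ensuring the second-moment bound on the estimation error transfers cleanly to the terminal iterate — to be the main technical obstacle; the rest is Cauchy--Schwarz, the diameter bound $\|h - h_i^K\| \leq 2\epsilon$ from feasibility of the constraint set $\calC$, and a Markov-inequality concentration step. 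A secondary point is to make sure the maximum over $i$ is handled consistently with the probability budget $\delta$ (hence the factor choices above), but since the final statement only tracks the $\max_i \sigma_{U_i}$ the cleanest route is to bound every $\|e_i^K\|$ by the largest threshold and take a single union bound.
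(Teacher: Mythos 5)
Your proposal follows essentially the same route as the paper's proof: add and subtract the stochastic gradient, apply Cauchy--Schwarz with the feasibility bound $\|h-h_i^K\|\leq 2\epsilon$, invoke the stopping condition \eqref{eq:stochasticalphafosp}, and then use Markov's inequality on the squared estimation error via Assumption \ref{As:DifferenceGradients}, choosing the threshold so that $\alpha'=\alpha+2\epsilon\max_i \sigma_{U_i}\sqrt{1/(\delta b_{U_i})}$. Your added remarks on the stopping-time measurability and the union-bound bookkeeping over $i$ are points the paper treats only implicitly, but they do not change the argument.
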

\begin{proof}
See Appendix \ref{app:theo_HighProb}.
\end{proof}

This result guarantees the convergence of Algorithm \ref{alg:CLA}. There are several elements to take into consideration regarding the point at which it converges. The approximation $\alpha'$ deviates from $\alpha$ by the proximity factor $\epsilon$ of cross-learning, the high probability guarantee $1-\delta$,  and the bound given by Assumption \ref{As:DifferenceGradients}, which can be made arbitrarily small by increasing the batch size $b_{U_i}$.
 
\section{Numerical Results}\label{sec:NumericalResults}

\begin{figure}[t]
	\centering
    \includegraphics[scale=1]{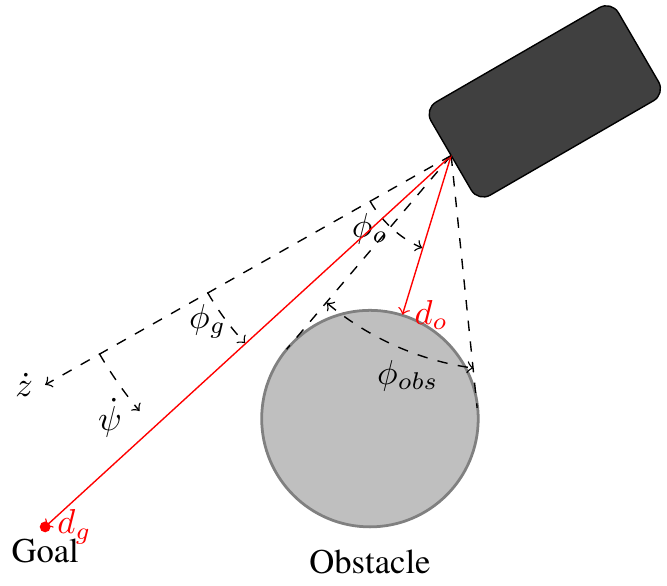}        		
	\caption{Nonholonomic robot model used for simulations. The system state is given by the tuple $(d_o,\phi_o,d_g,\phi_g,\phi_{obs})$; where $d_o$ and $d_g$ are the distances between the agent and the obstacle and goal, respectively; $\phi_o$ and $\phi_g$ are the angles between the agent and the obstacle and the goal; and $\phi_{obs}$ is the angle of occlusion of the obstacle. The system actions are given by $(\dot z,\dot \psi)$ where $\dot z$ is the radial velocity and $\dot \psi$ is the angular velocity of the agent.}
	\label{fig:robot_model}
\end{figure}

To evaluate the performance of our proposed cross-learning methodology, we resort to a continuous navigation problem with obstacles. We consider the  robot model shown in Figure \ref{fig:robot_model}. Following this model, the state of the agent is given by $s=(d_o,\phi_o,d_g,\phi_g,\phi_{obs})$; where $d_o$ and $d_g$ are the distances between the agent and the obstacle and goal, respectively; $\phi_o$ and $\phi_g$ are the angles between the agent and the obstacle and the goal; and $\phi_{obs}$ is the angle of occlusion of the obstacle. Real-valued actions  $(\dot z,\dot \psi)$ represent  the radial and angular velocity of the agent, respectively. Furthermore, we only consider positive $\dot z$ actions. The system MDP is formed by the  following dynamics
\begin{align}
\begin{split}
x_{t+1}&=x_t+T_s \dot{z}_t \cos(\psi_t), \\ 
y_{t+1}&=y_t+T_s \dot{z}_t \sin(\psi_t), \\ 
\psi_{t+1}&=\psi_t+T_s \dot{\psi_t},
\end{split}
\end{align}
where $T_s \in \real^+$ is the time step and both $x$ and $y$ are the cardinal points of the agent's state measured from the point $(0,0)$. Furthermore, for each task, the agent takes actions from a Gaussian distribution with covariance $\Sigma=\diag(0.05,0.05)$. The mean of this Gaussian distribution is given by a policy formed by the linear combination of Gaussian kernels which are non-symmetric, and their covariance matrix is given by $\Sigma_{\kappa}=\diag(1, \pi/5, 1, \pi/5, \pi/10)$. The tasks over which the agent will train correspond to navigating environments with different types of obstacles. For all tasks, a collision to an obstacle results in a negative reward $r(s,a)=-100$. Otherwise  the reward is mostly given by the distance to goal, specifically  $r(s,a)=10-10||d_g||_{2}$. The starting point of the agent is randomly drawn from $[0 ,10]\times [0,10]$ in the absolute $x \times y$ axes of the environment. Furthermore, the time step in the MDP dynamics is selected to be $T_s=0.5$. Other parameters of the system are a discount factor of $\gamma=0.9$, a fixed step size of $\eta=0.1$. A batch version of the stochastic gradient in Algorithm \ref{alg:unbiased} is computed averaging $4$ samples  per gradient ascent step. Furthermore, to avoid memory explosion an upper bound in the model order is set at $400$ kernels.

\subsection{Navigation with a Single Obstacle}
\label{subsec:NavigationSingle}

In order to study the problem of learning multiple tasks, we consider three  scenarios with obstacles that differ in positions and radius. Specifically the obstacles are centered at $c_1=(7,2)$  with radius $r_1=0.5$ for the first task, $c_2=(2,2)$ with $r_2=1$ for the second task and $c_3=(7,7)$ with $r_3=2$ for the third task. We consider five policies which are trained in these scenarios. First, we consider the three policies agnostically trained for each specific task, without considering the data collected by the other agents. Then, we also consider an average policy over all tasks (equivalent to a cross-learning policy with $\epsilon=0$, and a cross-learning policy with $\epsilon=3$.

\begin{figure}[t]
    \centering
    \includegraphics[scale=1]{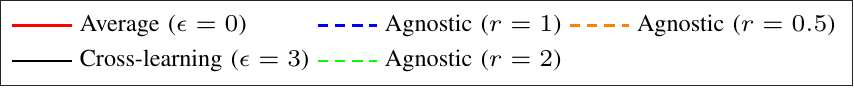}    
	\hspace{-0.2cm}    
    \subfigure[Circle obstacle.]{
    \includegraphics[scale=1]{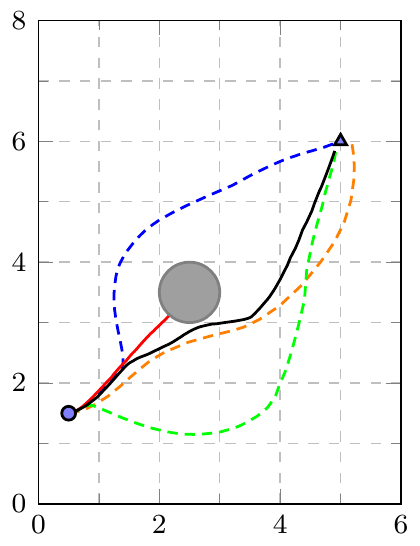}     
    \label{fig:robot_circle}    
	 }
	 \hspace{-0.6cm}
    \subfigure[Ellipse obstacle.]{
    \includegraphics[scale=1]{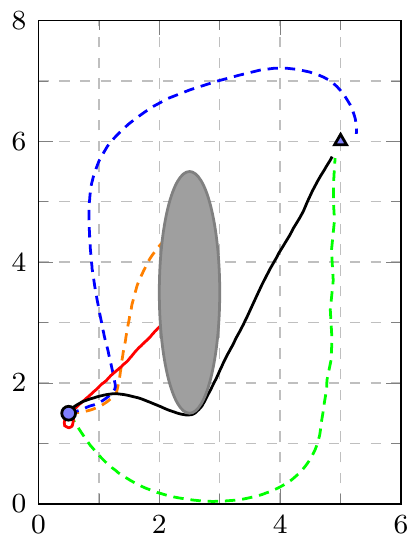}         
    \label{fig:robot_ellipse}
	 }     
	\caption{Sample trajectories for the policies obtained after $29{,}000$ training iterations. An environment previously seen during training (circle) and an unseen environment (ellipse) have been chosen for evaluation. Agents trained for a larger radius tend to avoid the obstacles in a more conservative way, while only the cross-learned policy ($\epsilon=3$), does so in an efficient manner.} 
    \label{fig:single_obstacle_nav}
\end{figure}

For evaluation we use the environments shown in Fig. \ref{fig:single_obstacle_nav}. We sample trajectories to illustrate the behavior of each policy when facing an obstacle with the starting point set at $s_{\texttt{start}}=[0.5,1.5]$ and the goal at $s_{\texttt{goal}}=[5,6]$. Figure \ref{fig:robot_circle} shows a circular obstacle centered at $(x_{\texttt{obs}},y_{\texttt{obs}})=(2.5,3.5)$. Figure \ref{fig:robot_ellipse} shows an ellipse obstacle also centered at $(x_{\texttt{obs}},y_{\texttt{obs}})=(2.5,3.5)$, but elongated along the $y$-axis. Specifically, the ellipse is defined by $((x-x_{\texttt{obs}})/\alpha_x)^2+((y-y_{\texttt{obs}})/\alpha_y)^2\leq 1$, with $(\alpha_x,\alpha_y)=(0.5,2)$.

Overall, all policies are capable of avoiding the circular obstacle in Fig. \ref{fig:robot_circle}. However, the policies trained for a larger radius do so in a conservative way, the more conservative the larger their training radius. Furthermore, as expected, the policy trained for the same radius $r=0.5$ achieves the best performance among those trained by the agnostic agents. However the cross-learned policy with $\epsilon=3$ obtains the best overall performance. This is credited to training on a larger dataset, given the ability of the cross-learning algorithm to combine the data collected by all agents during the process. In contrast, the average policy, given by $\epsilon=0$ crashes against the obstacle. In the case of the ellipse obstacle shown in Fig. \ref{fig:robot_ellipse},  not only the average policy ($\epsilon=0$) crashes, but also the one trained for a smaller circular radius of $r=0.5$, as it is uncapable of steering around the ellipse. The other agnostic agents trained for larger radius ($r=1$ and $r=2$) reach the goal but in an excessively conservative way. Similar to the circle case, the policy trained for a larger radius ($r=2$) does so in a more conservative manner.

\begin{figure}[t]
    \centering
    \subfigure[Environment with obstacle of radius $r=0.5$.]{
    \includegraphics[scale=1]{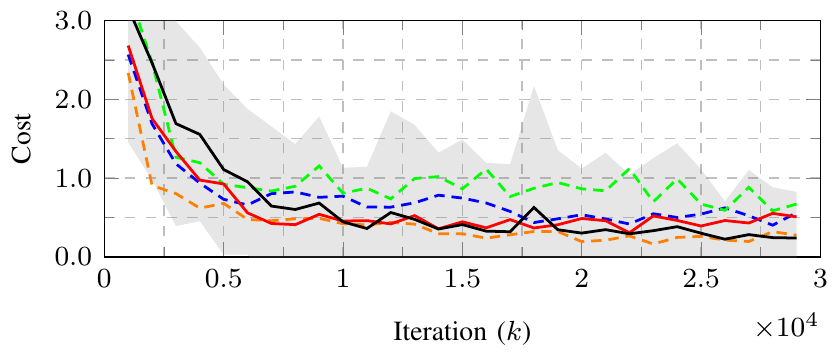}    	
    \label{fig:rew_multi_0}    
	 }
    \subfigure[Environment with obstacle of radius $r=1$.]{
    \includegraphics[scale=1]{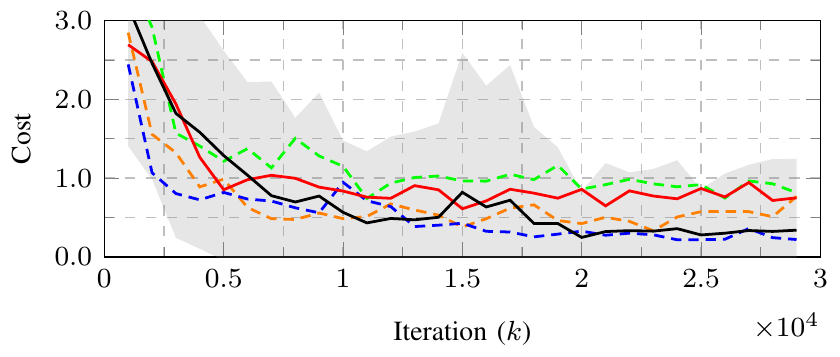}    	    
    \label{fig:rew_multi_1}
	 }    
    \subfigure[Environment with obstacle of radius $r=2$.]{
    \includegraphics[scale=1]{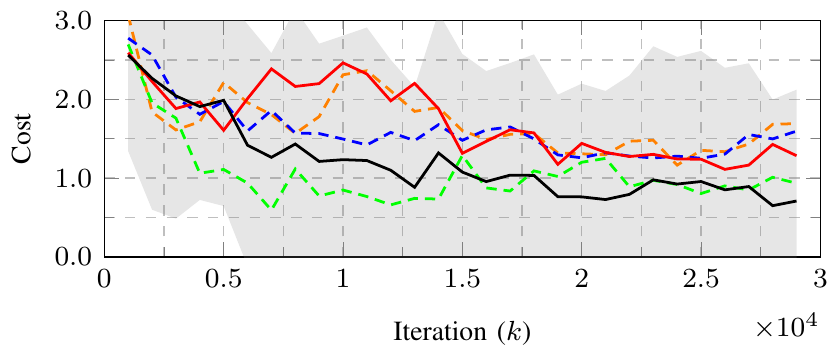}    	    
    \label{fig:rew_multi_2}
	 }     	  
     \includegraphics[scale=1]{pdf/navigation_legend.pdf}    	 
	\caption{Evolution of the absolute value of the average cost with respect to the training iteration. The scenario for evaluation comprises a circular obstacle centered at  $(x_{\texttt{obs}},y_{\texttt{obs}})=(2,2)$. The plots represent the average cost over $500$ independent trials, where the initial state is randomly drawn. The standard deviation of the cross-learning policy ($\epsilon=3$) is shown by the shaded area.}
	  \label{fig:rew_multi}
\end{figure}

\begin{remark}
Note that setting $\epsilon=0$ forces consensus on the policies, and this is  too restrictive. Alternatively, with $\epsilon>0$, each agent trains a different policy that adapts better to the data that it is collecting, and in this example, to adapt to a different radius. Policies are bound together by the constraint, so that the data collected by all agents are combined by the cross-learning algorithm. This data combination allows an agent to easily adapt to an object not seen during training. However, if $\epsilon$ is too large, the cross-learning constraint becomes inactive and the agents are agnostic of each other and the common policy $g$ becomes undetermined (cf. \eqref{eq:projectionProblemC}).    
\end{remark}

A more qualitative look at the behavior of the different policies can be gained by looking at the cost obtained during training. This is shown in Figure \ref{fig:rew_multi}, where we have plotted the absolute value of the average cost over the training iterations, with cost=-reward. The average cost has been obtained after $500$ independent trials. We plot three different figures, one for each of the training environments, $r=0.5$, $r=1$, and $r=2$. First, notice that out of the agnostic policies, the one specifically trained for its respective environment outperforms all the rest. For example, in Fig. \ref{fig:rew_multi_0}, corresponding to the scenario with a radius $r=0.5$, the agnostic policy trained for $r=0.5$, outperforms the rest. This behavior is repeated in Fig. \ref{fig:rew_multi_1} and \ref{fig:rew_multi_2}. However, agnostic policies do not perform well in different scenarios to the one being trained on. For example, the policy trained for $r=0.5$, performs the worst for the scenario with radius $r=2$ (Fig. \ref{fig:rew_multi_1}). This is validated by our previous findings in Fig. \ref{fig:single_obstacle_nav}, as the policy trained with radius $r=0.5$ has trouble maneuvering around bigger obstacles. More importantly, the cross-learned policy with $\epsilon=3$ performs comparably well to the task-specific policy in each case, even performing better in the case of $r=2$ (Fig. \ref{fig:rew_multi_2}). In contrast, as observed before, the average policy ($\epsilon=0$) underperforms for all the scenarios.

\begin{figure}[t]
	\centering
    \includegraphics[scale=1]{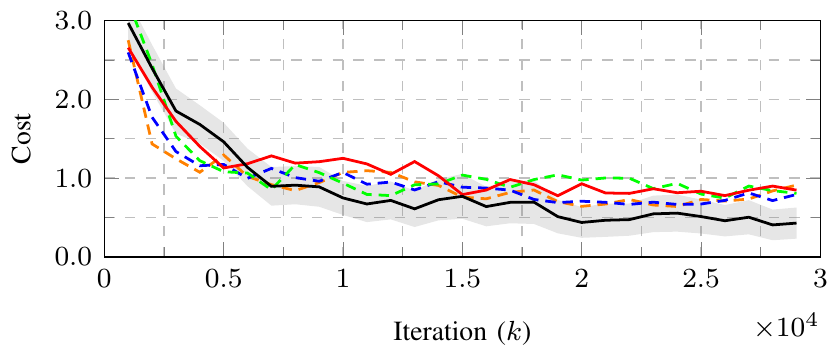}    	
	\caption{Absolute value of the cost averaged over all task vs. training iteration. The scenario for evaluation consists of the average cost with respect to the three scenarios $r=0.5$, $r=1$, and $r=2$. The plots represent the average cost over $500$ independent trials, where the initial state is randomly drawn. The standard deviation of the cross-learning policy ($\epsilon=3$) is shown by the shaded area.}
	\label{fig:rew_comparison}
\end{figure}

The adaptability of the cross-learned policy can be seen more clearly in Figure \ref{fig:rew_comparison}. In this figure, we plot the absolute value of the average cost over the three tasks. Overall, any of the agnostically trained policies, if averaged over tasks, performs similarly. This is also the case of the average policy ($\epsilon=0$). In contrast, the cross-learned policy ($\epsilon=3$)  outperforms the rest of the policies.

\subsection{Navigation with Multiple Obstacles}
\label{subsec:NavigationMultiple}

Recall that the robot model introduced in Fig. \ref{fig:robot_model} operates on a local frame. Thus, we can use the training on an individual obstacle and goal to study the generalization capabilities of our proposed method to multiple goals and obstacles of different shapes and sizes. To this end, we consider the scenario shown in Figure \ref{fig:robot_multi}. The same policies trained in the environment described in the previous section are evaluated in this new environment. The agents start at $[0.5,1.5]$ and navigate to a series of subsequent objectives in the following order $\texttt{goal}_1=[5,6]$, $\texttt{goal}_2=[5.5,1.5]$, and $\texttt{goal}_3=[11,5]$. The state of the agent is determined by the current goal and the closest obstacle.

First, notice that as in the single goal and obstacle navigation case of Fig. \ref{fig:single_obstacle_nav}, the average policy ($\epsilon=0$) fails to navigate, crashing against the first obstacle on its way to the first goal. Compared to this, the agnostically trained policies perform slightly better. The agnostic policy for $r=0.5$  performs well when avoiding the obstacles with similar small radius, but then crashes to the larger ellipse. The policy trained for the larger radius ($r=2$) avoids obstacles in a more conservative way, which ultimately prevents it from navigating to the last goal, overshooting and then crashing into the ellipse as well. However, the policy trained with the obstacle of radius $r=1$ succeeds in navigating to through all the goal, even though it does so in a conservative way. Compared to all of the agnostically trained policies, the cross-learned policy ($\epsilon=3$) navigates faster through the environment, achieving all the goals and avoiding all the obstacles.

\begin{figure}[t]
	\centering
    \includegraphics[scale=1]{pdf/navigation_legend.pdf}    	
    \includegraphics[scale=1]{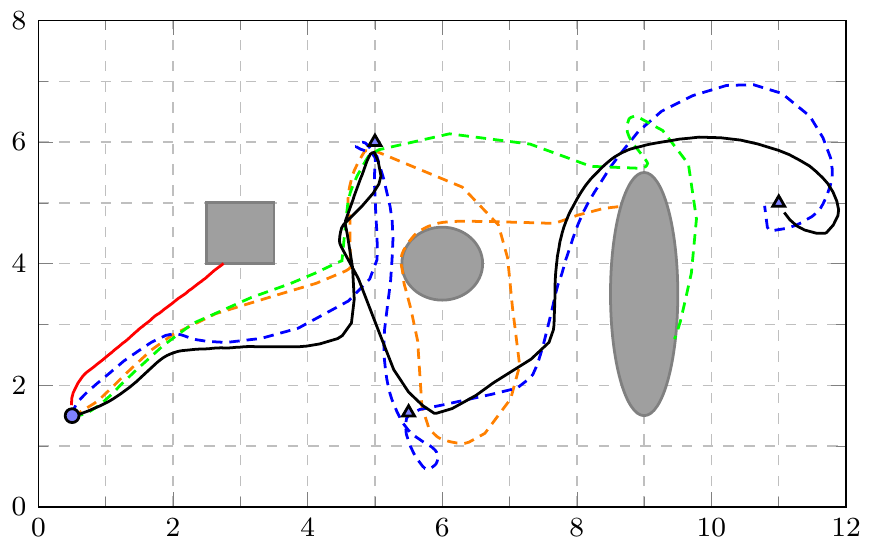}    		
	\caption{Navigation with multiple obstacles. Sample trajectories of the policies obtained after $29{,}000$ training iterations. The agents start at $[0.5,1.5]$ and navigate to a series of subsequent objectives in the following order $\texttt{goal}_1=[5,6]$, $\texttt{goal}_2=[5.5,1.5]$, and $\texttt{goal}_3=[11,5]$. The state of the agent is determined by the current goal and the closest obstacle.}
	\label{fig:robot_multi}
\end{figure}	

\section{Conclusions}
\label{sec:Conclusions}

We have introduced cross-learning as a framework for training policies in multiple tasks jointly. Posed as a constrained optimization problem, the learning strategy  maximizes  task specific  rewards while  coupling the policies together. The cross-learning policies live in between agnostic policies that train for their individual tasks separately, and   consensus policies that train for  average rewards. In between these two extremes, the policies gain the capability of learning to maximize separate rewards  but sharing training  data acquired across tasks.  This capability makes the cross-learning policies successful at two novel aspects.  It produces a central policy   that can be used as starting point to adapt quickly to one of the tasks trained for, in a situation when the agent does not know which task is currently facing. Moreover, this central policy generalizes better to tasks related but different to those seen during training. These improvements are obtained through a projected stochastic gradient ascent algorithm that incorporates state and reward data   sequentially, converging with high probability to a near optimal policy.   We have studied the properties of the required projection step, finding a relaxed formulation with a closed-form solution, useful when the computational resources are limited, and added an extra projection to curtail the number of kernels  in order to avoid memory explosion. Numerical results over  navigation tasks have shown that  policies can avoid obstacles not trained for, and can manage through environments with obstacles of multiple shapes.

\appendices

\section{Common Dictionary Kernel Orthogonal Matching Pursuit (CDKOMP)} \label{app:cdkomp}

In this appendix, we cover the required procedure to reduce the policies to a lower dimensional space. Starting by setting all the $N$ policies at iteration $k=0$ to zero, i.e. $h^0_i= 0$, each stochastic gradient ascent iteration results in new policies
\begin{align} \label{eq:GradAscent} 
	\bar{h}_{i}^{k}=h_i^k+\eta^k \hat{\nabla}_{h_i} U_i (h_i^k), \quad i=1,\dots,N,
\end{align}
where $\eta^k \in (0,1)$ is the step size and $\hat\nabla_{h_i}U_i(h_i^k)$ are obtained by the unbiased estimate for the gradient of $U_i$ given by Algorithm \ref{alg:unbiased}. For the matter of Algorithm \ref{alg:MDKOMP}, cross-learning policy $g$ can be considered as one policy whose gradient is always $0$, namely $\bar{g}^k=g^k$. Herein, cross-learning policy $g$ will be treated as any other policy.

 Policies $h_i$ can be represented by their respective dictionaries by rewriting $\eta^k \hat\nabla_{h_i}U_i(h_i^k,\cdot)=w_{i,k} \kappa(s_{i,k},\cdot)$. Hence, resulting at the $k$-th iteration in the dictionary $\textbf{D}_{h_i^k} = \{ s_{i,1},\dots,s_{i,M_i} \}$ and weights $\textbf{W}_{h_i^k}=\{w_{i,1},\dots,w_{i,M_i}\}$. A priori, each policy $h_i$ may have a different amount of kernels in their dictionaries, meaning that $M_i \neq M_j$, for $i\neq j$. However, this cannot happen because of Assumption \ref{assumption:kernels}, which states that all dictionaries should only differ in one kernel, which is the last kernel obtained via the unbiased estimate Algorithm \ref{alg:unbiased}. Thus, at the $k$-th iteration, all $\{h_i^k\},g^k$  policies must have $L^k$ kernels and the policy $\bar h_i^{k}$ produced after the gradient ascent step, $L^k+1$ kernels. Notice that the gradient ascent step adds no kernel to $g^k$. Hence, when reducing the complexity of the dictionary, an element is either eliminated from all dictionaries or not eliminated from any of them.

After obtaining an unbiased estimate for the gradient of $U_i$ via Algorithm \ref{alg:unbiased} and computing the gradient ascent, we perform the cross-learning projection \eqref{eq:projectionProblemC} or its relaxed version \eqref{eq:projectionProblemR} and obtain the next policy, 
\begin{align}\label{eq:app_projection}
(\{\tilde h_i^{k+1}\},\tilde{g}^{k+1})=\calP_{\calC} \left[\{ h_i^k+\eta^k \hat\nabla_{h_i}U_i(h_i^k) \},g^k\right].
\end{align}
In this case, after the projection, all policies $\tilde h_i^{k+1}$ and $\tilde{g}^{k+1}$ will have the same number of kernels $L_{k+1}=L_{k}+N$. If no kernel is eliminated from the dictionary, at time $k$, the $N+1$ policies will each have $k N$ number of kernels. This ultimately can result in computational problems as $k$ increases. Thus, we aim to find a way to reduce the number of kernels while maintaining their function approximation capabilities. Hence, we compute the approximation error of eliminating each element in the dictionary. Abusing notation, we omit the step $k$ in the notation, as what follows is independent from the iteration step and we will refer to the cross-learning policy $h_{N+1}=g$. Defining $\tilde  h_i$ by dictionary $\textbf{D}_{\tilde h_i}=\{s_{i,1},\dots,s_{i,L}\}$ and the vector of stacked weights $\textbf{w}_{\tilde h_i}=[w_{i,1}^T,\dots,w_{i,L}^T]^T$, we refer to $\tilde h_i^j$ with $j \in \{1,\dots,L\}$ as the policy that has the same kernel centers as $\textbf{D}_{\tilde h_i}$ but not element $s_j$. Then, we can compute the minimum distance between $h_i$ and $h_i^j$  as, follows
\begin{align}	\label{eq:ErrorDict}
	e(j,\tilde h_{i})&=\underset{\mathbf{w}_{\tilde h_i^j} \in\real^{L \times P}} {\min}||\tilde h_i-\tilde h_i^j||^2_{\calH}\\
	&=\mathbf{w}_{\tilde h_i}^T \mathbf{K}_{\textbf{D}_{\tilde h_i}, \textbf{D}_{\tilde h_i} }  \mathbf{w}_{\tilde h_i} \nonumber\\
	&+\underset{\mathbf{w}_{\tilde h_i^j} \in\real^{L \times P}} {\min}
	\mathbf{w}_{\tilde h_i^j}^T \mathbf{K}_{\textbf{D}_{\tilde h^j_i}, \textbf{D}_{\tilde h^j_i} }  \mathbf{w}_{\tilde h^j_i} 
	- 2 \mathbf{w}_{\tilde h_i^j}^T \mathbf{K}_{\textbf{D}_{\tilde h^j_i}, \textbf{D}_{\tilde h_i} }  \mathbf{w}_{\tilde h_i} 
	\nonumber
\end{align}
by substituting $\tilde h_i$ and $\tilde h_i^j$ by their weights and dictionary representation and defining the block matrices $\mathbf{K}_{(\cdot,\cdot)}$ whose $(l,m)$-th blocks of size $p \times p$ are $\Kappa (s_l,s_m)$, with $s_l ,s_m \in \textbf{D}_{\tilde h_i} $ in $\mathbf{K}_{\textbf{D}_{\tilde h_i}, \textbf{D}_{\tilde h_i} }
$, $s_l \in \textbf{D}_{\tilde h_i^j},  s_m \in \textbf{D}_{\tilde h_i}$ in $\mathbf{K}_{\textbf{D}_{\tilde h^j_i}, \textbf{D}_{\tilde h_i} }
$ and $s_l ,s_m \in \textbf{D}_{\tilde h^j_i}$ in $\mathbf{K}_{\textbf{D}_{\tilde h^j_i}, \textbf{D}_{\tilde h^j_i} }
$. Notice that element $\mathbf{w}_{\tilde h_i}^T \mathbf{K}_{\textbf{D}_{\tilde h_i}, \textbf{D}_{\tilde h_i} }  \mathbf{w}_{\tilde h_i} $ is constant in the minimization, as it does not depend of $\mathbf{w}_{\tilde h^j_i}$. Furthermore, the previous problem is a least-squares which close form solution can be given by 
\begin{align}
\mathbf{w}_{\tilde h_i^j}^\star=
\mathbf{K}_{\textbf{D}_{\tilde h^j_i}, \textbf{D}_{\tilde h_i^j} }^\dagger
\mathbf{K}_{\textbf{D}_{\tilde h^j_i}, \textbf{D}_{\tilde h_i} }
\mathbf{w}_{\tilde h_i},
\label{eq:MinimalErrorWeights}
\end{align}
where $(\cdot)^{\dagger}$ denotes the Moore-Penrose pseudo-inverse. After arguing that the policies share the kernel centers, and ergo share their dictionaries $\textbf{D}_{\tilde h_i}$, we have obtained a closed-form solution to the error of taking one element out of the dictionary. Now, we can evaluate which the kernel eliminations renders the smallest error in all of the policies. First, we must compute the largest error per element in all of the policies,
\begin{align} \label{problem:max_error}
\tilde e(j)=\underset{i \in \{1,\dots,N\}}{\max} \quad e(j,\tilde h_i) , 
\end{align}
where $e(j,\tilde h_i)$ is the error obtained by equation \eqref{eq:ErrorDict}. Hence, $\tilde e(j)$ denotes the maximum error in all policies if the $j$-th kernel were to be eliminated. This means that if $\tilde e(j)$ is small, the error of eliminating kernel $j$ in all policies is small. Now, we need to compute which element renders the smallest error. Namely,
\begin{align}
\tilde e^\star=\underset{j \in \{1,\dots,L\}}{\min} \quad \tilde e(j). 
\end{align}
\begin{algorithm}[t]
	\label{alg:KOMP}
	\caption{Common Dictionary Kernel Orthogonal Matching Pursuit (CDKOMP) }
	\label{alg:MDKOMP}
	\textbf{Input:} Functions $\tilde h_i^{k+1}, i=1,\dots,N+1$ defined by common dictionary $\textbf{D}_{\tilde h_i} \in \R^{q\times L_{k+1}}$ and weights $\textbf{w}_{\tilde h_i} \in \R^{p  L_{k+1}}$. Compression budget $\beta>0$. 
	\begin{algorithmic}[1]
		\State \textit{Initialize:} Error $\tilde e^\star=0$.
		\While {$\tilde e^\star < 0$ and $0<M_{k+1}$}  
		\For {$i=1,\dots ,N+1$ and $j=1, \dots , L_{k+1}$} 
		\State Compute minimal error $e(j,\tilde h_{i})$ with \eqref{eq:ErrorDict}
		\EndFor
		\State Compute the maximum error per kernel  $\tilde e(j)$ with \eqref{problem:max_error} 
		\State Compute less informative element $ j^\star= \argmin_j \tilde e(j)$
		\State Compute pruning error $e^\star = \min_j \tilde e(j)$
		\If {$e^\star< \beta$}
		\State Update weight $\textbf{w}_{\tilde h_i} \leftarrow  \textbf{w}_{\tilde h_i^j}^\star$ with \eqref{eq:MinimalErrorWeights}
		\State Prune dictionary $\textbf{D}_{\tilde h_i} \leftarrow \textbf{D}_{\tilde h_i^j}$ 
		\State Decrease model order $L_{k+1} \leftarrow  L_{k+1}-1$
		\EndIf
		\EndWhile
		\\ \Return dictionary $\textbf{D}_{\tilde h_i} \in \R^{q\times L_{k+1}}$, weights $\textbf{w}_{\tilde h_i} \in \R^{p  L_{k+1}}$	\end{algorithmic}
\end{algorithm}
As mentioned earlier, our goal is to eliminate an element from the dictionary, as long as the error that we incur is upper bounded by a value $\beta$. Hence, if $\tilde e^\star<\beta$, element $j^\star$, being the one that verifies $\tilde e^\star=\tilde e(j^\star)$ is removed. The final result of this pruning process are $N$ functions $h^{k+1}_i$ that are less than $\beta$ close to the functions before the pruning procedure and are represented by the same or less elements. This results in the Common Dictionary Kernel Orthogonal Matching Pursuit (CDKOMP) procedure, a summarized description of which is shown in Algorithm \ref{alg:MDKOMP}. This process can be equivalently interpreted as a projection to a RKHS of lower dimension.  If we denote by $\textbf{D}_{\tilde h_i}$ the dictionary of the output of Algorithm \eqref{alg:MDKOMP}, then the resulting policies $h_i^{k+1}$ can be interpreted as, the projection
\begin{equation}\label{eq:DictionaryPruning}
(\{h^{k+1}_{i}\},g^{k+1})=\calP_{\calH_{\bDi}} \left [ \calP_{\calC} \Big[\{h_i^{k} +\eta^k \hat{\nabla}_{h_i} U_i (h_i^{k})\}, g^k  \Big] \right].
\end{equation}
Now, after showing that the pruning  procedure can be seen as a projection over a lower dimension space, we show that this projection is equivalent to running a biased projected gradient ascent. 

\begin{proposition}\label{prop:prunning}
	The projection to pruned dictionary step in Algorithm \ref{alg:CLA} is equivalent to running biased projected gradient ascent, with bias
\begin{align} \label{eq:bias_def}
\{b^k_i\}=\calP_{\calH_{\bDi}}  &\left [ \calP_{\calC} \Big[\{h^k_i +\eta^k \hat{\nabla}_{h_i} U_i (h^k_i)\},g^k \Big] \right]  \nonumber\\
&- \calP_{\calC} \Big[ \{h_i^k + \eta^k \hat{\nabla}_{h_i} U_i (h^k_i) \},g^k \Big] 
\end{align}
bounded by the compression budget, i.e., $\|b_i^k\|\leq\beta$ for all $k>0,i=1,\dots,N+1$.
\end{proposition}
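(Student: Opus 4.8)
The plan is to split the argument into two parts: an algebraic rewriting that exhibits the pruned update as a projected stochastic gradient step with an additive perturbation, and a quantitative bound on that perturbation obtained from the stopping rule of CDKOMP. The first part is essentially bookkeeping; the second part is where the real work lies.

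For the first part I would start from the identity \eqref{eq:DictionaryPruning}, which writes the pruned policies as $(\{h_i^{k+1}\},g^{k+1})=\calP_{\calH_{\bDi}}\big[\calP_{\calC}[\{h_i^{k}+\eta^k\hat{\nabla}_{h_i}U_i(h_i^{k})\},g^k]\big]$. Adding and subtracting the inner term $\calP_{\calC}[\{h_i^{k}+\eta^k\hat{\nabla}_{h_i}U_i(h_i^{k})\},g^k]$ and naming the residual $(\{b^k_i\},b_{N+1}^k)$ exactly as in \eqref{eq:bias_def}, one obtains $(\{h_i^{k+1}\},g^{k+1})=\calP_{\calC}[\{h_i^{k}+\eta^k\hat{\nabla}_{h_i}U_i(h_i^{k})\},g^k]+(\{b^k_i\},b_{N+1}^k)$. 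This is, by definition, a biased projected stochastic gradient ascent iteration of the form \eqref{eq:PSGA}, with the bias injected after the projection and with the gradient of each $U_i$ with respect to $g$ being null, so that $g$ is moved only by the projection and by $b_{N+1}^k$. Nothing further is needed for the equivalence claim.

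For the second part I would bound $\|b^k_i\|$. Fix $k$ and $i\in\{1,\dots,N+1\}$ (with $h_{N+1}=g$), and let $\tilde h_i^{k+1}$ be the $i$-th component of $\calP_{\calC}[\{h_i^{k}+\eta^k\hat{\nabla}_{h_i}U_i(h_i^{k})\},g^k]$, so that $b^k_i=h_i^{k+1}-\tilde h_i^{k+1}$. By Assumption \ref{assumption:kernels} and the discussion preceding Algorithm \ref{alg:MDKOMP}, all $N+1$ policies carry a common dictionary, so every pass of CDKOMP deletes the same knot from each of them and re-fits the surviving weights through \eqref{eq:MinimalErrorWeights}; since deleting a knot and optimally re-fitting the weights in $\calH$-norm is precisely the orthogonal projection onto the RKHS spanned by the remaining knots, $h_i^{k+1}$ is obtained from $\tilde h_i^{k+1}$ through a finite chain of orthogonal projections onto nested subspaces $\calH\supseteq\calH_1\supseteq\cdots\supseteq\calH_R=\calH_{\bDi}$. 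Because such projections compose, i.e. $\calP_{\calH_{r+1}}\calP_{\calH_r}=\calP_{\calH_{r+1}}$ whenever $\calH_{r+1}\subseteq\calH_r$, the net map equals a single orthogonal projection $h_i^{k+1}=\calP_{\calH_{\bDi}}\tilde h_i^{k+1}$, and by the Pythagorean theorem $\|b^k_i\|^2=\sum_{r=0}^{R-1}\|\calP_{\calH_r}\tilde h_i^{k+1}-\calP_{\calH_{r+1}}\tilde h_i^{k+1}\|^2$, each summand being a per-removal error $e(j_r^\star,\cdot)$ of the form \eqref{eq:ErrorDict}. Since the loop in Algorithm \ref{alg:MDKOMP} accepts a removal only while the worst such error over the $N+1$ policies — taken through \eqref{problem:max_error}, and satisfying $e(j_r^\star,\tilde h_i^{k+1})\le\tilde e(j_r^\star)$ for each $i$ — stays within the compression budget, the $\calH$-distance between $\tilde h_i^{k+1}$ and its projection is at most the budget, i.e. $\|b^k_i\|\le\beta$ uniformly in $i$ and $k$.

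The step I expect to be the main obstacle is making this last bound fully rigorous: one must track how the squared deviation $\|\tilde h_i^{k+1}-h_i^{k+1}\|^2$ accumulates over successive knot deletions and verify that the threshold test inside Algorithm \ref{alg:MDKOMP} caps this cumulative quantity rather than merely the marginal cost of the final deletion — which is exactly where the nested-subspace (Pythagorean) structure and the common-dictionary property are used, and which relies on reading the error \eqref{eq:ErrorDict} as the distance of the pre-pruning policy to its best reconstruction on the running reduced dictionary (the destructive-with-pre-fitting reading), since a purely incremental test would only yield a bound with an extra $\sqrt{R}$ factor. A secondary point requiring care is that \eqref{problem:max_error} is a maximum over tasks, so one must check that this maximum dominates each individual $\|b^k_i\|$, and that the cross-learned policy $g=h_{N+1}$ is handled on the same footing, as the appendix prescribes by treating $g$ as a policy with identically zero gradient.
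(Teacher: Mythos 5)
Your first step---adding and subtracting the pre-pruning projection $\calP_{\calC}\bigl[\{h_i^k+\eta^k\hat{\nabla}_{h_i}U_i(h_i^k)\},g^k\bigr]$ in \eqref{eq:DictionaryPruning} and calling the residual $\{b_i^k\}$---is exactly the paper's proof of the equivalence claim, so that half of your proposal coincides with the paper.

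For the bound $\|b_i^k\|\leq\beta$ the paper gives no derivation: it is a one-line appeal to the compression budget of Algorithm \ref{alg:MDKOMP}. Your attempt to actually derive it is more careful, and the mechanics are sound: deleting a knot and optimally re-fitting the weights is the orthogonal projection onto the span of the surviving knots, nested projections compose, and the Pythagorean identity gives $\|b_i^k\|^2=\sum_{r}\|\calP_{\calH_r}\tilde h_i-\calP_{\calH_{r+1}}\tilde h_i\|^2$, each summand being precisely the marginal error \eqref{eq:ErrorDict} evaluated at pass $r$ (and the max over tasks in \eqref{problem:max_error} does dominate every individual policy, including $g=h_{N+1}$). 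But the obstacle you flag is genuine and the paper does not resolve it: as written, Algorithm \ref{alg:MDKOMP} overwrites weights and dictionary after each accepted removal, so the test $e^\star<\beta$ caps only the marginal squared error of each deletion relative to the already-pruned policy; your own telescoping then yields $\|b_i^k\|^2< R_k\,\beta$ with $R_k$ the number of knots removed at iteration $k$, not $\|b_i^k\|\leq\beta$. There is also a units mismatch you pass over: $e(j,\tilde h_i)$ is a \emph{squared} norm, so even a single accepted removal only gives $\|b_i^k\|<\sqrt{\beta}$. The clean bound therefore requires either the cumulative reading you describe (testing each candidate reduced dictionary against the pre-pruning policy) or a rescaled budget such as $\beta^2/R_k$ per removal. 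In short, your proposal is a more rigorous rendering of the same argument rather than a different route; the gap it exposes is an imprecision in the algorithm/budget convention that the paper's one-line proof silently glosses over, not a missing idea that the paper supplies.
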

\begin{proof}
	Note that following the procedure in Algorithm \ref{alg:CLA}, the projection to the dictionary of reduced dimension is given by \eqref{eq:DictionaryPruning}. Adding and subtracting the argument of this projection, that is $\calP_{\calC} \Big[\{h^k_i + \eta^k \hat{\nabla}_{h_i} U_i  (h^k_i)\},g^k \Big]$, it is possible to write the resulting policies $h_i^{k+1}$ as
	\begin{align}
	(\{h_{i}^{k+1}\},g^{k+1})=\calP_{\calH_{\bDi}} \left [ \calP_{\calC} \Big[ \{h^k_i +\eta^k \hat{\nabla}_{h_i} U_i  (h^k_i) \},g^k\Big] \right] \nonumber \\ 
	- \calP_{\calC} \Big[\{h^k_i +\eta^k \hat{\nabla}_{h_i} U_i  (h^k_i) \Big\},g^k]\nonumber \\
	+ \calP_{\calC} \Big[ \{h^k_i +\eta^k \hat{\nabla}_{h_i} U_i  (h^k_i)\},g^k \Big].
	\end{align}
Then, simply using the definition of the bias $b^k_i$, given by equation \eqref{eq:bias_def}, the previous expression can we written as
		\begin{align}
	(\{h^{k+1}_i\},g^{k+1})=\calP_{\calC} \Big[ \{h^k_i + \eta^k \hat{\nabla}_{h_i} U_i  (h^k)\},g^k \Big] + \{b^k_i\}.
	\end{align}
	Using the compression budget imposed in Algorithm \ref{alg:MDKOMP}, it follows that $|b^k_i| \leq \beta$ for all $k>0,i=1,\dots,N+1$.
\end{proof}

\section{Proof of Lemma \ref{Lemma:F_Diff_Lower_Bound}} \label{app:Lemma:F_Diff_Lower_Bound}
\begin{proof}
Let us define the difference between the policy vector $\textbf{h}$ at iteration $k$ and $k+1$ by $\Delta \textbf{h} ^k=\textbf{h}^{k+1}-\textbf{h}^k=[h^{k+1}_1-h^{k}_1,\dots,h^{k+1}_N-h^{k}_N]$. Let us also define the internal product between two policy vectors $\langle \textbf{h}^{k+1}, \textbf{h}^{k} \rangle=\sum_{i=1}^N \langle h^{k+1}_i,h^{k}_i\rangle_{\calH}$, which induces the norm $||\textbf{h}||=\sqrt{\sum_{i=1}^N ||h_i ||_{\calH}^2}$.

Applying the Mean Value Theorem in each policy $h_i$, for $f_i=\lambda_i h^{k+1}_i+(1-\lambda_i)h_i^k$, with $\lambda_i \in [0,1]$ and $ i=1,\dots,N$, we have that
\begin{align}\label{eq:MeanValueTheorem}
F(\textbf{h}^{k+1})-F(\textbf{h}^{k})=\langle \nabla_{\textbf{h}} F (\mathbf{f}), \Delta \textbf{h}^k \rangle,
\end{align}
where $\textbf{f}=[f_1,\dots,f_N]$. Adding and subtracting the stochastic gradient of equation \eqref{eq:MeanValueTheorem}, i.e., $\langle \hat{\nabla}_{\textbf{h}} F (\mathbf{f}), \Delta \textbf{h}^k \rangle$ we have
\begin{align}
F&(\textbf{h}^{k+1})-F(\textbf{h}^{k})\nonumber\\
&=\langle \hat{\nabla}_{\textbf{h}} F(\textbf{h}^k), \Delta \textbf{h}^k \rangle +\langle \nabla_{\textbf{h}} F(\textbf{f}) - \hat{\nabla}_{\textbf{h}} F(\textbf{h}^k) , \Delta \textbf{h}^k \rangle  \\
&=\textstyle\sum_{i=1}^N\langle \hat{\nabla}_{h_i} U_i(h_i^k), \Delta h_i^k \rangle +\langle \nabla_{\textbf{h}} F(\textbf{f}) - \hat{\nabla}_{\textbf{h}} F(\textbf{h}^k) , \Delta \textbf{h}^k \rangle. \nonumber
\end{align}
Using the fact that the stopping condition equation \eqref{eq:stochasticalphafosp} is not met for $k<K$, we can define $j$ as the index at which the maximum internal product $j=\argmax_i \langle \hat{\nabla}_{h_i} U_i(h_i^k), \Delta h_i^k \rangle$ is obtained. By virtue of the stopping condition \eqref{eq:stochasticalphafosp} and Cauchy-Schwartz inequality we obtain, 	
\begin{align}
F(&\textbf{h}^{k+1})-F(\textbf{h}^{k})\nonumber\\
&= \langle \hat{\nabla}_{h_j} U_j(h_j^k), \Delta h_j^k \rangle+\textstyle\sum_{i=1,i\neq j}^N\langle \hat{\nabla}_{h_i} U_i(h_i^k), \Delta h_i^k \rangle \nonumber \\ 
&+\langle \nabla_{\textbf{h}} F(\textbf{f}) - \hat{\nabla}_{\textbf{h}} F(\textbf{h}^k) , \Delta \textbf{h}^k \rangle  \\
&\geq \alpha +\textstyle\sum_{i=1,i\neq j}^N\langle \hat{\nabla}_{h_i} U_i(h_i^k), \Delta h_i^k \rangle\nonumber \\
&+\langle \nabla_{\textbf{h}} F(\textbf{f}) - \hat{\nabla}_{\textbf{h}} F(\textbf{h}^k) , \Delta \textbf{h}^k \rangle\\
&\geq \alpha-||\hat{\nabla}_{\textbf{h}}F(\textbf{h}^k)||_{\calF} ||\Delta \textbf{h}^k|| \nonumber\\
&-\langle \nabla_{\textbf{h}} F(\textbf{f})-\hat{\nabla}_{\textbf{h}} F(\textbf{h}^k) , \Delta \textbf{h}^k \rangle.
\end{align}
Observe that the previous inequality holds as adding the $j$-term to the policy vector increases its norm. Then, adding and subtracting the internal product between the deterministic gradient and the difference $\nabla \textbf{h}^k$, $\langle\nabla_{\textbf{h}}F(\textbf{h}^k),\Delta \textbf{h}^k \rangle$, the right hand side of the previous expression can be rewritten as, 
\begin{align}
F&(\textbf{h}^{k+1})-F(\textbf{h}^{k})\nonumber\\
&\geq \alpha-||\nabla_{\textbf{h}}F(\textbf{h}^k)|| ||\Delta \textbf{h}^k||- \langle \nabla_{\textbf{h}} F(\textbf{h}^k) - \hat{\nabla}_{\textbf{h}} F(\textbf{h}^k) , \Delta \textbf{h}^k \rangle\nonumber\\
&-\langle \nabla_{\textbf{h}} F(\textbf{f}) - \nabla_{\textbf{h}} F(\textbf{h}^k) , \Delta \textbf{h}^k \rangle.
\label{eq:Ineq_Last_with_SumAndSubstract}
\end{align}
Applying Cauchy-Schwartz inequality and explicitly rewriting the internal product between the cross-learning function as the sum of the term-wise internal products we obtain, 
\begin{align}
&F(\textbf{h}^{k+1})-F(\textbf{h}^{k})\nonumber\\
&\geq \alpha-||\hat{\nabla}_{\textbf{h}}F(\textbf{h}^k)|| ||\Delta \textbf{h}^k||-|| \nabla_{\textbf{h}} F(\textbf{h}^k) -  \hat{\nabla}_{\textbf{h}} F(\textbf{h}^k) || ||\Delta \textbf{h}^k ||\nonumber\\
&-\textstyle\sum_{i=1}^N \langle \nabla_{h_i} U_i(f_i) - \nabla_{h_i} U_i(h_i^k) , \Delta h_i^k \rangle.
\label{eq:Ineq_Spliting_IntProd}
\end{align}
Using Cauchy-Schwartz inequality and by virtue of Assumption \ref{As:Grad_Lipschitz} and upper bounding the Lipschitz constants by $L_{\max}=\max_i L_i$ the last term can be bounded by, 
\begin{align}
&F(\textbf{h}^{k+1})-F(\textbf{h}^{k})\nonumber\\
&\geq \alpha-||\hat{\nabla}_{\textbf{h}}F(\textbf{h}^k)|| ||\Delta \textbf{h}^k||-|| \nabla_{\textbf{h}} F(\textbf{h}^k) -  \hat{\nabla}_{\textbf{h}} F(\textbf{h}^k) || ||\Delta \textbf{h}^k ||\nonumber\\
&-\textstyle\sum_{i=1}^N || \hat{\nabla}_{h_i} U_i(f_i) - \nabla_{h_i} U_i(h_i^k) || || \Delta h_i^k ||\\
&\geq \alpha-||\hat{\nabla}_{\textbf{h}}F(\textbf{h}^k)|| ||\Delta \textbf{h}^k||-|| \nabla_{\textbf{h}} F(\textbf{h}^k) -  \hat{\nabla}_{\textbf{h}} F(\textbf{h}^k) || ||\Delta \textbf{h}^k ||\nonumber\\
&-\textstyle\sum_{i=1}^N L_i || \Delta h_i^k ||^2\\
&\geq \alpha-||\hat{\nabla}_{\textbf{h}}F(\textbf{h}^k)|| ||\Delta \textbf{h}^k||-|| \nabla_{\textbf{h}} F(\textbf{h}^k) -  \hat{\nabla}_{\textbf{h}} F(\textbf{h}^k) || ||\Delta \textbf{h}^k ||\nonumber\\
&- L_{\max}\textstyle\sum_{i=1}^N || \Delta h_i^k ||^2.
\label{eq:Ineq_AfterLipschitz}
\end{align}
Since we defined the norm $||\Delta\textbf{h}||=(\textstyle\sum_{i=1}^N ||\Delta h_i||^2_\calH)^{1/2}$, we can rewrite the last expression as, 
\begin{align}
&F(\textbf{h}^{k+1})-F(\textbf{h}^{k})\nonumber\\
&\geq \alpha-||\hat{\nabla}_{\textbf{h}}F(\textbf{h}^k)|| ||\Delta \textbf{h}^k||-|| \nabla_{\textbf{h}} F(\textbf{h}^k) -  \hat{\nabla}_{\textbf{h}} F(\textbf{h}^k) || ||\Delta \textbf{h}^k ||\nonumber\\
&- L_{\max}||\Delta\textbf{h}^k||^2.
\label{eq:Ineq_Without_hi}
\end{align}
Using triangle inequality over the norm of cross-learning gradients $\hat{\nabla}_{\textbf{h}} F(\textbf{h}^k)$ function we can write, 
\begin{align}
F(&\textbf{h}^{k+1})-F(\textbf{h}^{k})\nonumber\\
&\geq \alpha-||\hat{\nabla}_{\textbf{h}}F(\textbf{h}^k)|| ||\Delta \textbf{h}^k||- L_{\max}||\Delta\textbf{h}^k||^2\nonumber\\
&-||\Delta \textbf{h}^k ||\textstyle\sum_{i=1}^N || \nabla_{h_i} U_i(h_i^k) -  \hat{\nabla}_{h_i} U_i(h_i^k) || .
\label{eq:Ineq_RightBeforeDeltas}
\end{align}
So far we only considered policies $\{h_i\}_{i=1}^N$ and not the central $g$.
Notice that because of how we defined the norm, expanding the policy vector with the cross-learning policy results in $||\textbf{z}||=||[\textbf{h},g]||\geq||\textbf{h}||$. By defining the expanded policy vector $\textbf{z}=[\textbf{h},g]$ we can rewrite \eqref{eq:Ineq_RightBeforeDeltas}, 
\begin{align}
F(&\textbf{h}^{k+1})-F(\textbf{h}^{k})\nonumber\\
&\geq \alpha-||\hat{\nabla}_{\textbf{h}}F(\textbf{h}^k)|| ||\Delta \textbf{z}^k||- L_{\max}||\Delta\textbf{z}^k||^2\nonumber\\
&-||\Delta \textbf{z}^k ||\textstyle\sum_{i=1}^N || \nabla_{h_i} U_i(h_i^k) -  \hat{\nabla}_{h_i} U_i(h_i^k) || .
\label{eq:Ineq_AddingG}
\end{align}

Where we have defined $\Delta \textbf{z}^k=\textbf{z}^{k+1}-\textbf{z}^{k}$ as the difference between the expanded policy vector $\textbf{z}^k=[\textbf{h}^k,g^k]$ and its next iteration, given by $\textbf{z}^{k+1}=\calP_{\calC}[\textbf{z}^k+ \eta^k \hat{\nabla}_{\textbf{z}} F(\textbf{z}^k)]+\textbf{b}^k=\calP_{\calC}[\{\textbf{h}^k+ \eta^k \hat{\nabla}_{\textbf{h}} F(\textbf{h}^k)\},g^k]+\textbf{b}^k$. The previous equality holds as the cross-learning function $F(\cdot)$ does not depend on the cross-learning policy $g$ and thus its derivative with respect to cross-learning policy is zero, $\nabla_{g} F(\cdot)=0$. As the projection to the convex set $\calP_\calC$ is non-expansive, using the triangle inequality we can write an inequality with respect to the stochastic gradient of  the expected discounted returns $U_i$,
\begin{align}
	||\Delta \textbf{z}^k||&=||\textbf{z}^{k+1}-\textbf{z}^k|| \nonumber\\
	&=||\calP_{\calC} [\textbf{z}^k +\eta^k \hat{\nabla}_{\textbf{z}} F(\textbf{z}^k) ]+\textbf{b}^k - \textbf{z}^k|| \nonumber \\
	&\leq ||\calP_{\calC} [\textbf{z}^k +\eta^k \hat{\nabla}_{\textbf{z}} F(\textbf{z}^k) ] - \textbf{z}^k||+||\textbf{b}^k|| \nonumber\\
	&\leq ||\eta^k \hat{\nabla}_{\textbf{z}} F(\textbf{z}^k)||+||\textbf{b}^k|| \nonumber\\
	&= ||\eta^k \hat{\nabla}_{\textbf{h}} F(\textbf{h}^k)||+||\textbf{b}^k|| \nonumber\\
	&\leq \eta^k|| \hat{\nabla}_{\textbf{h}} F(\textbf{h}^k)||+\textstyle\sum_{i=1}^{N+1}||b_i^k||\nonumber\\
	&\leq \eta^k\textstyle\sum_{i=1}^{N} ||\hat{\nabla}_{h_i} U_i(h_i^k)||+\sqrt{N+1}\beta. \label{eq:Ineq_Deltah_gradUi}
\end{align}
With a similar procedure, in the case of the square of the norm of the expanded policy vector, $||\Delta \textbf{z}^k||^2$ we can obtain an upper bound by bounding the inner product between the cross-terms as follows,
\begin{align}
	&||\Delta \textbf{z}^k||^2 =||\textbf{z}^{k+1}-\textbf{z}^k||^2 \nonumber\\
	&= ||\eta^k \hat{\nabla}_{\textbf{h}} F(\textbf{h}^k)||^2+(N+1)\beta^2+2\sqrt{N+1}\beta||\eta^k \hat{\nabla}_{\textbf{h}} F(\textbf{h}^k)|| \nonumber\\
	&= (||\eta^k \hat{\nabla}_{\textbf{h}} F(\textbf{h}^k)||+\sqrt{N+1}\beta)^2 \nonumber\\
	&\leq 3||\eta^k \hat{\nabla}_{\textbf{h}} F(\textbf{h}^k)||^2+3(N+1)\beta^2. \label{eq:Ineq_Deltah2_gradF}
\end{align}
Where we used $(a+b)^2\leq 3a^2+3b^2$. Hence, we can apply expressions \eqref{eq:Ineq_AddingG}, \eqref{eq:Ineq_Deltah_gradUi} and \eqref{eq:Ineq_Deltah2_gradF} to the bound in \eqref{eq:Ineq_RightBeforeDeltas}, to obtain the following inequality
\begin{align}
&F(\textbf{h}^{k+1})-F(\textbf{h}^{k})\nonumber\\
&\geq \alpha-||\hat{\nabla}_{\textbf{h}}F(\textbf{h}^k)|| (\eta^k|| \hat{\nabla}_{\textbf{h}} F(\textbf{h}^k)||+\sqrt{N+1}\beta) \nonumber\\
&- L_{\max}(3||\eta^k \hat{\nabla}_{\textbf{h}} F(\textbf{h}^k)||^2+3(N+1)\beta^2)\nonumber\\
&-\eta^k\textstyle\sum_{i=1}^N ||\hat{\nabla}_{h_i} U_i(h_i^k)||(\textstyle\sum_{i=1}^N || \nabla_{h_i} U_i(h_i^k) -  \hat{\nabla}_{h_i} U_i(h_i^k) ||)\nonumber\\
&-\sqrt{N+1}\beta(\textstyle\sum_{i=1}^N || \nabla_{h_i} U_i(h_i^k) -  \hat{\nabla}_{h_i} U_i(h_i^k)||)\\
&\geq \alpha-3(N+1)\beta^2L_{\max}- \sqrt{N+1}\beta||\hat{\nabla}_{\textbf{h}}F(\textbf{h}^k)|| \nonumber\\
&- (3 (\eta^k)^2 L_{\max}+\eta^k)|| \hat{\nabla}_{\textbf{h}} F(\textbf{h}^k)||^2\nonumber\\
&-\eta^k\textstyle\sum_{i=1}^N ||\hat{\nabla}_{h_i} U_i(h_i^k)||(\textstyle\sum_{i=1}^N || \nabla_{h_i} U_i(h_i^k) -  \hat{\nabla}_{h_i} U_i(h_i^k) ||)\nonumber\\
&-\sqrt{N+1}\beta(\textstyle\sum_{i=1}^N || \nabla_{h_i} U_i(h_i^k) -  \hat{\nabla}_{h_i} U_i(h_i^k)||).
\label{eq:Ineq_FirstAfterUsingbfU}
\end{align}
 Observe that the square of the norm of the stochastic gradient of the cross-learning function $F(\cdot)$ is equal to the sum of the squares of the term-wise norm, namely,
\begin{align}
||\hat{\nabla}_{\textbf{h}}F(\textbf{h}^k)||^2&=||\hat{\nabla}_{\textbf{h}}\textstyle\sum_{i=1}^NU_i(h_i^k)||^2\nonumber\\
&=||[\hat{\nabla}_{h_1}U_1(h_1^k),\dots,\hat{\nabla}_{h_N}U_N(h_N^k)]^T||^2\nonumber\\
&=\textstyle\sum_{i=1}^N ||\hat{\nabla}_{h_i}U_i(h_i^k)||^2. \label{eq:NormSquareSum}
\end{align}
Substituting \eqref{eq:NormSquareSum} for $||\hat{\nabla}_{\textbf{h}}F(\textbf{h}^k)||^2$ and using triangle inequality on the norm of the gradient of the cross-learning function in \eqref{eq:Ineq_FirstAfterUsingbfU} we obtain, 
\begin{align}
&F(\textbf{h}^{k+1})-F(\textbf{h}^{k})\nonumber\\
&\geq \alpha-3(N+1)\beta^2L_{\max}- \sqrt{N+1}\beta\textstyle\sum_{i=1}^N ||\hat{\nabla}_{h_i}U_i(h_i^k)|| \nonumber\\
&- (3 (\eta^k)^2 L_{\max}+\eta^k)(\textstyle\sum_{i=1}^N ||\hat{\nabla}_{h_i}U_i(h_i^k)||^2)\nonumber\\
&-\eta^k\textstyle\sum_{i=1}^N ||\hat{\nabla}_{h_i} U_i(h_i^k)||(\textstyle\sum_{i=1}^N || \nabla_{h_i} U_i(h_i^k) -  \hat{\nabla}_{h_i} U_i(h_i^k) ||)\nonumber\\
&-\sqrt{N+1}\beta(\textstyle\sum_{i=1}^N || \nabla_{h_i} U_i(h_i^k) -  \hat{\nabla}_{h_i} U_i(h_i^k)||).
\label{eq:Ineq_SecondAfterUsingbfU}
\end{align}
We can further add and subtract the deterministic gradient of the expected discounted returns $\nabla_{h_i} U_i(h_i^k)$ to the two terms that refer to the norm of the stochastic gradient of the expected discounted returns $||\hat{\nabla}_{h_i} U_i(h_i^k)||$ of equation \eqref{eq:Ineq_SecondAfterUsingbfU} and using the triangle inequality we obtain,
\begin{align}
&F(\textbf{h}^{k+1})-F(\textbf{h}^{k})\nonumber\\
&\geq \alpha-3(N+1)\beta^2L_{\max}\nonumber\\
&-\sqrt{N+1}\beta\textstyle\sum_{i=1}^N ||\hat{\nabla}_{h_i}U_i(h_i^k)-\nabla_{h_i}U_i(h_i^k)|| \nonumber\\
&- \sqrt{N+1}\beta\textstyle\sum_{i=1}^N ||\nabla_{h_i}U_i(h_i^k)|| \nonumber\\
&- (3 (\eta^k)^2 L_{\max}+\eta^k)(\textstyle\sum_{i=1}^N ||\hat{\nabla}_{h_i}U_i(h_i^k)||^2)\nonumber\\
&-\eta^k\textstyle\sum_{i=1}^N ||\nabla_{h_i} U_i(h_i^k)||(\textstyle\sum_{i=1}^N || \nabla_{h_i} U_i(h_i^k) -  \hat{\nabla}_{h_i} U_i(h_i^k) ||)\nonumber\\
&-\eta^k(\textstyle\sum_{i=1}^N || \nabla_{h_i} U_i(h_i^k) -  \hat{\nabla}_{h_i} U_i(h_i^k) ||)^2\nonumber\\
&-\sqrt{N+1}\beta(\textstyle\sum_{i=1}^N || \nabla_{h_i} U_i(h_i^k) -  \hat{\nabla}_{h_i} U_i(h_i^k)||).
\label{eq:Ineq_PreviousToExpectation}
\end{align}
Then, taking the expectation on both side and using Assumptions \ref{As:Max_Grad_Deterministic}, \ref{As:Variance_Grad} and \ref{As:DifferenceGradients} it yields
\begin{align}
&\EX[F(\textbf{h}^{k+1})-F(\textbf{h}^{k})]\nonumber \\
&\geq \alpha-3(N+1)\beta^2L_{\max}- \sqrt{N+1}\beta\textstyle\sum_{i=1}^N\frac{\sigma_{U_i}}{\sqrt{b_{U_i}}} \nonumber\\
&- \sqrt{N+1}\beta\textstyle\sum_{i=1}^N \mu_i \nonumber- (3 (\eta^k)^2 L_{\max}+\eta^k)(\textstyle\sum_{i=1}^N B_{U_i}^2)\nonumber\\
&-\eta^k\textstyle\sum_{i=1}^N \mu_i(\textstyle\sum_{i=1}^N \frac{\sigma_{U_i}}{\sqrt{b_{U_i}}})-\eta^k(\textstyle\sum_{i=1}^N \frac{\sigma_{U_i}}{\sqrt{b_{U_i}}})^2\nonumber\\
&-\sqrt{N+1}\beta(\textstyle\sum_{i=1}^N \frac{\sigma_{U_i}}{\sqrt{b_{U_i}}}).
\label{eq:Ineq_Equation_56}
\end{align}
Thus, by defining $\eta=\sup_k \eta_k$ and $\gamma_F$ to be
\begin{align}
&\gamma_F=\alpha-3(N+1)\beta^2L_{\max}- \sqrt{N+1}\beta\textstyle\sum_{i=1}^N\frac{\sigma_{U_i}}{\sqrt{b_{U_i}}} \nonumber\\
&- \sqrt{N+1}\beta\textstyle\sum_{i=1}^N \mu_i \nonumber- (3 \eta^2 L_{\max}+\eta)(\textstyle\sum_{i=1}^N B_{U_i}^2)\nonumber\\
&-\eta\textstyle\sum_{i=1}^N \mu_i(\textstyle\sum_{i=1}^N \frac{\sigma_{U_i}}{\sqrt{b_{U_i}}})-\eta(\textstyle\sum_{i=1}^N \frac{\sigma_{U_i}}{\sqrt{b_{U_i}}})^2\nonumber\\
&-\sqrt{N+1}\beta(\textstyle\sum_{i=1}^N \frac{\sigma_{U_i}}{\sqrt{b_{U_i}}}).
\end{align}
we obtain the desired result.
\end{proof}

\section{Proof of Lemma \ref{Lemma:UpperBoundIterations}} \label{app:Lemma:UpperBoundIterations}

\begin{proof}
	Given the iterates $\textbf{h}^K$ and the initial policies $\textbf{h}^0$, we can express the difference $\EX[F(\textbf{h}^K)-F(\textbf{h}^0)]$, as the summation over difference of iterates. That is,
	\begin{align}
	\EX[F(\textbf{h}^K)-F(\textbf{h}^0)]=\EX \bigg[\sum_{k=1}^K F(\textbf{h}^{k})-F(\textbf{h}^{k-1})\bigg].
	\end{align}
	Then, taking the expectation with respect to the final iterate at $t=K$, we have 
	\begin{align}
	\EX[F(\textbf{h}^K)-F(\textbf{h}^0)]=\EX_K\bigg[\EX\bigg[\sum_{k=1}^K F(\textbf{h}^{k})-F(\textbf{h}^{k-1})|K\bigg]\bigg]. \nonumber
	\end{align}
	Expanding the outermost expectation and applying the result of Lemma \ref{Lemma:F_Diff_Lower_Bound} it follows, 
	\begin{align}
	\EX[F(\textbf{h}^K)-F(\textbf{h}^0)]=& \sum_{t=0}^{\infty} \EX\bigg[\sum_{k=1}^t F(\textbf{h}^{k})-F(\textbf{h}^{k-1})\bigg] \Pr (K=t) \nonumber\\
	\geq&\gamma_F \sum_{t=0}^{\infty}t  \Pr (K=t) \nonumber\\
	\geq& \gamma_F \EX[K].
	\end{align}
	Rearranging the terms of the previous inequality gives us the desired result.
\end{proof}

\section{Proof of Theorem \ref{theo_HighProb}} \label{app:theo_HighProb}

\begin{proof} 	
	We begin with the inner product $\langle \nabla_{h_i} U_i (h_i^K), (h-h_i^K)\rangle$ and then add and subtract its stochastic counterpart $\langle \hat{\nabla}_{h_i} U_i (h_i^K), (h-h_i^K)\rangle$ to obtain,
	\begin{align}\label{eq:theo_sumsubs}
	\langle \nabla_{h_i} U_i &(h_i^K), (h-h_i^K)\rangle =\langle \hat{\nabla}_{h_i} U_i (h_i^K), (h-h_i^K)\rangle \nonumber\\
	&\quad +\langle(\nabla_{h_i} U_i (h_i^K)-\hat{\nabla}_{h_i} U_i (h_i^K)), (h-h_i^K)\rangle.
	\end{align}
	By construction, policies $h^K_i$ and any $h \in \calC$ must be $\epsilon$-close to the cross-learned policy $g$ hence $||h_i^k-h||\leq2\epsilon$, $\forall h \in \calC$, then using the stopping time condition \eqref{eq:stochasticalphafosp} and the Cauchy-Schwartz inequality, we can rewrite the previous expression as
	\begin{align}\label{eq:theo_firstineq}
	\langle \nabla_{h_i} U_i & (h_i^K), (h-h_i^K)\rangle   \nonumber\\
	&\leq \alpha+ 2 \epsilon  ||\hat{\nabla}_{h_i} U_i (h_i^K)-\nabla_{h_i} U_i(h_i^K)||\\ 
	&\leq \alpha+ 2 \epsilon \max_{i \in \{1,\ldots,N \} } ||\hat{\nabla}_{h_i} U_i (h_i^K)-\nabla_{h_i} U_i(h_i^K)||\nonumber.
	\end{align}
	Since the right hand side of \eqref{eq:theo_firstineq} does not depend on $h$ or $i$, then by computing the maximum over $h$ it holds that
	\begin{align}\label{eq:theo_secondineq}
	&\max_{ \substack {h \in \calC}} \langle \nabla_{h_i} U_i (h_i^K), (h-h_i^K)\rangle \nonumber\\
	&\quad \leq \alpha + 2 \epsilon \max_{i \in \{1,\ldots,N \} } ||\hat{\nabla}_{h_i} U_i (h_i^K)-\nabla_{h_i} U_i(h_i^K)||. 
	\end{align}
	Using Markov's inequality together with Assumption \ref{As:DifferenceGradients}, it yields,
	\begin{align}\label{theo:chebyshev}
	\Pr\biggl(\max_{i \in \{1,\ldots,N \} }||\hat{\nabla}_{h_i} U_i & (h_i^K)-\nabla_{h_i} U_i(h_i^K)||\leq\xi\biggr) \nonumber\\
	&\quad \geq 1-\frac{1}{\xi^2}\max_{i \in \{1,\ldots,N \} } \frac{\sigma_{U_i}^2}{b_{U_i}}.
	\end{align}
	Hence, combining \eqref{eq:theo_secondineq} and \eqref{theo:chebyshev} it follows that
	\begin{align}\label{eq:final_eq}
	\Pr\biggl( \max_{i \in \{1,\ldots,N \}} \langle \nabla_{h_i}  U_i (h_i^K), & (h-h_i^K)\rangle \leq \alpha + 2\epsilon\xi \biggr) \nonumber\\
	&\geq 1-\frac{1}{\xi^2}\max_{i \in \{1,\ldots,N \} } \frac{\sigma_{U_i}^2}{b_{U_i}}.
	\end{align}
	Then we can define the following constants, $\alpha'=\alpha + 2\epsilon \xi$ and $\delta=\frac{1}{\xi^2}\max_{i \in \{1,\ldots,N \} } \frac{\sigma_{U_i}^2}{b_{U_i}}$ to obtain
	\begin{align}
	\Pr\left( \max_{i \in \{1,\ldots,N \} } \langle \nabla_{h_i} U_i (h_i^{K}), (h-h_i^{K}) \rangle \leq \alpha' \right) 
	\geq 1-\delta
	\end{align}	
	where we can also rearrange terms to obtain the value $\alpha'=\alpha + 2\epsilon\max_{i \in \{1,\ldots,N \}} \sigma_{U_i} \sqrt{\frac{1}{\delta b_{U_i}}}$, which is the intended result. 
\end{proof}

\bibliographystyle{IEEEtran}
\bibliography{bib}

\begin{IEEEbiography}[{\includegraphics[width=1in,height=1.25in,clip,keepaspectratio]{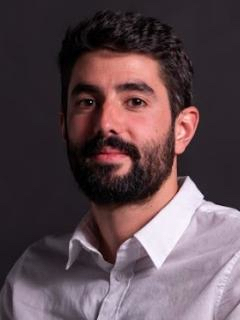}}]
  {Juan Cervi\~no} received the B.Sc. degree in electrical engineering from the Universidad de la Rep\'ublica Oriental del Uruguay, Montevideo, in 2018. He is a PhD student in the Department of Electrical and Systems Engineering at the University of Pennsylvania since 2019. Juan's current research interests are in machine learning, optimization and control.
\end{IEEEbiography}
\begin{IEEEbiography}[{\includegraphics[width=1in,height=1.25in,clip,keepaspectratio]{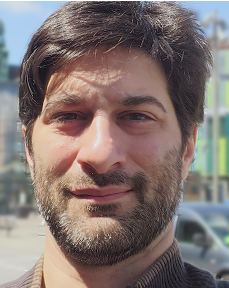}}]{Juan Andr\'es Bazerque} received the B.Sc. degree in electrical engineering from Universidad de la Rep\'ublica (UdelaR), Montevideo, Uruguay, in 2003, and the M.Sc. and Ph.D. degrees from the Department of Electrical and Computer Engineering, University of Minnesota (UofM), Minneapolis, in 2010 and 1013 respectively.
 Since 2015 he is an Assistant Professor with the  Department of Electrical Engineering  at UdelaR. His current research interests include  stochastic optimization and  networked systems, focusing on reinforcement learning, graph signal processing, and power systems optimization and control. 
Dr. Bazerque is the recipient of the UofM's Master Thesis Award 2009-2010, and co-reciepient of the best paper award at the 2nd International Conference on Cognitive Radio Oriented Wireless Networks and Communication 2007.   
 \end{IEEEbiography}
\begin{IEEEbiography}[{\includegraphics[width=1in,height=1.25in,clip,keepaspectratio]{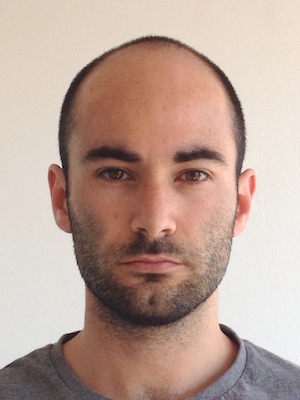}}]
  {Miguel Calvo-Fullana} received his B.Sc. degree in electrical engineering from the Universitat de les Illes Balears (UIB), in 2010 and the M.Sc. and Ph.D. degrees in electrical engineering from the Universitat Polit\`ecnica de Catalunya (UPC), in 2013 and 2017, respectively. From September 2012 to July 2013 he was a research assistant with Nokia Siemens Networks (NSN) and Aalborg University (AAU). From December 2013 to July 2017, he was with the Centre Tecnol\`ogic de Telecomunicacions de Catalunya (CTTC) as a research assistant. Since September 2017, he is a postdoctoral researcher at the University of Pennsylvania. His research interests lie in the broad areas of learning and optimization for autonomous systems. In particular, he is interested in multi-robot systems with an emphasis on wireless communication and network connectivity. He is the co-recipient of best paper awards at GlobalSIP 2015, ICC 2015, and ICASSP 2020.  
\end{IEEEbiography}
\begin{IEEEbiography}[{\includegraphics[width=1in,height=1.25in,clip,keepaspectratio]{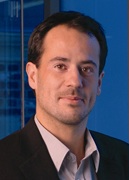}}]
    {Alejandro Ribeiro} received the B.Sc. degree in electrical engineering from the Universidad de la Republica Oriental del Uruguay, Montevideo, in 1998 and the M.Sc. and Ph.D. degree in electrical engineering from the Department of Electrical and Computer Engineering, the University of Minnesota, Minneapolis in 2005 and 2007. From 1998 to 2003, he was a member of the technical staff at Bellsouth Montevideo. After his M.Sc. and Ph.D studies, in 2008 he joined the University of Pennsylvania (Penn), Philadelphia, where he is currently Professor of Electrical and Systems Engineering. His research interests are in the applications of statistical signal processing to collaborative intelligent systems. His specific interests are in wireless autonomous networks, machine learning on network data and distributed collaborative learning. Papers coauthored by Dr. Ribeiro received the 2014 O. Hugo Schuck best paper award, and paper awards at CDC 2017, SSP Workshop 2016, SAM Workshop 2016, Asilomar SSC Conference 2015, ACC 2013, ICASSP 2006, and ICASSP 2005. His teaching has been recognized with the 2017 Lindback award for distinguished teaching and the 2012 S. Reid Warren, Jr. Award presented by Penn’s undergraduate student body for outstanding teaching. Dr. Ribeiro is a Fulbright scholar class of 2003 and a Penn Fellow class of 2015.    
\end{IEEEbiography}

\end{document}